\theoremstyle{plain}
\newtheorem{theorem}{Theorem}
\newtheorem{lemma}{Lemma}
\newtheorem{corollary}{Corollary}
\newtheorem{proposition}{Proposition}
\theoremstyle{definition}
\newtheorem*{acknowledgements}{Acknowledgements}
\theoremstyle{remark}
\newtheorem{remark}{Remark}
\newcommand{\coloneq}{\mathrel{\mathop:}=}
\renewcommand{\Im}{\mathop{\mathrm{Im}}\nolimits}
\newcommand{\twast}{\mathbin{\scriptstyle{\circledast}}}
\newcommand{\oR}{{\mathbb R}}
\newcommand{\oC}{{\mathbb C}}
\newcommand{\defeq}{\overset{\text{def}}{=}}
\newcommand{\labelT}[1]{\label{T#1}}
\newcommand{\labelP}[1]{\label{P#1}}
\newcommand{\labelL}[1]{\label{L#1}}
\newcommand{\labelC}[1]{\label{C#1}}
\newcommand{\refS}[1]{Section~\ref{S#1}}
\newcommand{\refT}[1]{Theorem~\ref{T#1}}
\newcommand{\refP}[1]{Proposition~\ref{P#1}}
\newcommand{\refL}[1]{Lemma~\ref{L#1}}
\begin{document}

\title{Twisted convolution  and Moyal  star product of
generalized functions}

\author{M.~A.~Soloviev}
\address{Lebedev
Physics Institute, Russian Academy of Sciences, Leninsky Prospect
53, Moscow 119991, Russia} \email{soloviev@lpi.ru}

\thanks{}
\subjclass[2000]{53D55, 46F05, 46F10, 46E25, 46L65}

\keywords{Moyal product, twisted convolution, Weyl symbol,
Weyl-Heisenberg group, noncommutative  field theory, topological
$*$-algebra, generalized function}

\begin{abstract}
We consider nuclear function spaces on which the Weyl-Heisenberg
group acts continuously and study the basic properties of the
twisted convolution product of  the functions with  the dual space
elements. The final theorem characterizes the corresponding
algebra of convolution multipliers  and shows that it contains all
sufficiently rapidly decreasing functionals in the dual space.
Consequently, we obtain a general description of the Moyal
multiplier algebra of the Fourier-transformed space. The results
extend the Weyl symbol calculus beyond the traditional framework
of tempered distributions.
\end{abstract}

\hfill  FIAN-TD/2012-15



\vspace{2cm}

 \maketitle
\section{Introduction}\label{S1}
The twisted convolution product of functions $g_1(s)$ and $g_2(s)$
on a linear symplectic space is a noncommutative deformation of
the ordinary convolution and is defined by the formula
 \begin{equation}
(g_1\twast_\theta g_2)(s)=\int   g_1(t)
g_2(s-t)\,e^{\tfrac{i}{2}\theta(s, t)}dt, \label{1.1}
\end{equation}
where $\theta$ is the bilinear skew-symmetric form  specifying the
symplectic structure.\footnote{The twisted convolution operation
is often denoted by  $*_\theta$, but  we use the symbol
$\twast_\theta$ here to avoid confusion with the star product
$\star_\theta$.} The Fourier transform converts the twisted
convolution into the Weyl-Groenewold-Moyal star product
$\star_\theta$ which gives the composition rule for the Weyl
symbols of quantum mechanical operators and plays a key role in
the Weyl quantization, see~\cite{BS}, \cite{ZFC}. It is worth
noting that the composition rule for  phase-space functions, which
corresponds to the composition of operators on a Hilbert space,
was originally written by von Neumann~\cite{N}  just in terms of
the twisted convolution.  It is customary  to define the star
multiplication and twisted convolution first for smooth and
rapidly decreasing functions in the Schwartz space $S$, which
forms an associative topological algebra under either of the two
operations. But in practice, depending on the problem under study,
we must consider an extension of these operations to one or
another subspace of the dual space $S'$ of tempered distributions.
Antonets proposed  a maximal extension by duality that consisted
in constructing the multiplier algebra of the algebra
$(S,\star_\theta)$~\cite{A1}-\cite{A3} or, equivalently, of the
algebra $(S,\twast_\theta)$. This extension was later studied in
many papers and most thoroughly in~\cite{Mail}-\cite{E} (a
detailed review and references can be found in~\cite{G}).

Field theory models  on noncommutative spaces based on using the
Moyal star product  have been intensively investigated  in the
last 15 years, (see, e.g.,~\cite{Sz} for an introduction to this
topic). The  interest in noncommutative spaces was stimulated by
the in-depth analysis of the quantum limitations on the accuracy
of localization of space-time events in quantum theory including
gravity~\cite{DFR1}, \cite{DFR2}  and by the study of the low
energy limit of string theory~\cite{SW}. There is
reason~\cite{Alv}-\cite{S06} to believe that the framework of
tempered distributions is too narrow for a consistent formulation
of the general principles of noncommutative quantum field theory.
The Moyal product is nonlocal and its expansion in powers of the
noncommutativity parameter $\theta$ converges only on analytic
test functions whose Fourier transforms decrease at infinity
faster than the Gaussian exponential function~\cite{I}, \cite{Ch}.
An analysis of microcausality violations in the simplest
noncommutative models~\cite{Green}-\cite{S08}  indicates a
possible connection between noncommutative field theory and the
previously considered nonlocal theories, which treat quantum
fields as operator-valued generalized functions defined on a
suitable space of analytic test functions instead of the Schwartz
space. The problem of appropriately  generalizing the Weyl symbol
calculus arises.

In~\cite{I}, we established a condition under which a nuclear test
function space $E(\oR^d)$ with the structure of a topological
algebra under ordinary convolution is also an algebra under the
twisted convolution  and its Fourier-conjugate space is  hence an
algebra under the Moyal product. This condition can be written as
 \begin{equation}
e^{-\tfrac{i}{2}\theta}\in M(E\mathbin{\widetilde{\otimes}}E),
\label{1.2}
\end{equation}
where $E\mathbin{\widetilde{\otimes}}E$ is the completed
projective tensor product identifiable with  $E(\oR^{2d})$ by the
kernel theorem and $M(E\mathbin{\widetilde{\otimes}}E)$ is the
space of its multipliers with respect to  ordinary pointwise
multiplication. In the case of Gel'fand-Shilov spaces
$S^\alpha_\beta$~\cite{GS2} considered in~\cite{I},
condition~\eqref{1.2} leads to the restriction $\alpha\ge\beta$ on
the specifying indices.   Palamodov precisely described the set of
pointwise multipliers for the spaces $S^\alpha_\beta$~\cite{P},
and their corresponding algebras of Moyal multipliers were
constructed in~\cite{S11}.

Here, we  show that the problem under discussion can be solved in
a general form and a Moyal multiplier algebra can be constructed
for any complete, nuclear, barrelled\footnote{Practically all
spaces used in the theory of generalized functions have these
properties, see~\cite{Sch} for their definition and role in
functional analysis.} space on which the Weyl-Heisenberg group
acts continuously. We also  describe the elements of these
algebras using only well-known facts from the theory of tensor
products of nuclear spaces~\cite{Sch}, which allows  avoiding
complicated analytic estimates. The basic observation is that
condition~\eqref{1.2} implies that
 \begin{equation}
v\twast_\theta g\in M(E)\quad\text{and}\quad g\twast_\theta v\in
M(E)\quad \text{for all $g\in E$, $v\in E'$,} \label{1.3}
\end{equation}
where $M(E)$ is the space of pointwise multipliers for $E$. This
allows characterizing  those elements of the dual space $E'$ that
are multipliers of the algebra $(E,\twast_\theta)$ sufficiently
exactly.

The paper is organized as follows.  In Sec.~\ref{S2}, we consider
function spaces endowed with a continuous linear action of the
Weyl-Heisenberg group and define the twisted convolution of
elements of the space with  elements of its dual. We also present
the noncommutative analogues of some properties of the usual
convolution. In Sec.~\ref{S3} we introduce the basic notion of a
twisted convolution  multiplier. Section~\ref{S4} contains the
main theorems;  there, we show that the implication~\eqref{1.2}
$\Rightarrow$~\eqref{1.3} holds under natural assumptions on the
spaces under consideration, and using this implication, we obtain
a characterization of the algebra of twisted convolution
multipliers. Section~\ref{S5} is devoted to the most interesting
case of spaces invariant under the Fourier transform. In
Sec.~\ref{S6} we  illustrate the general construction with
concrete examples. The appendix contains the proof of a useful
simple criterion for the continuity of the action of topological
groups on spaces of the  considered type.

\section{The Weyl-Heisenberg group and the twisted \\convolution}
\label{S2}
 Let  $E$ be a locally convex space of complex-valued
functions on $\oR^d$ and let  $\theta$  be a bilinear symplectic
form, i.e., a nondegenerate skew-symmetric inner product on
$\oR^d$. We assume that the twisted shift operator
\begin{equation}
\tau_{\theta,s}\colon g(t)\rightarrow e^{\tfrac{i}{2}\theta
(s,t)}g(t-s),\qquad g\in E, \label{2.1}
\end{equation}
is defined and continuous on $E$  for each $s\in \oR^d$. The
operators $\tau_{\theta,s}$ realize a projective representation of
the translation group to which corresponds a linear representation
of its  central extension generated by the initial symplectic
structure, i.e of the Weyl-Heisenberg group consisting of elements
of the form $a=(\alpha,s)$, where $\alpha\in\oR$ and $s\in\oR^d$,
with the multiplication law
\begin{equation}
a_1a_2=\left(\alpha_1+\alpha_2+\dfrac{1}{2}\theta(s_1,
s_2),s_1+s_2\right).
 \notag
\end{equation}
We also assume that $E$ is invariant under the complex conjugation
$g\to g^*$.  Then, along with the representation $a\to
e^{i\alpha}\tau_{\theta,s}$, the conjugate representation $a\to
e^{-i\alpha}\bar\tau_{\theta,s}$ is realized on $E$ with
\begin{equation}
\bar\tau_{\theta,s}\colon g(t)\rightarrow
e^{-\tfrac{i}{2}\theta(s, t)}g(t-s). \label{2.2}
\end{equation}
If $E$ is also invariant under the coordinate reflection $g(t)\to
\check{g}(t)\coloneq g(-t)$, then the functions
\begin{equation}
(v\twast_\theta g)(s)\coloneq \left\langle
v,e^{\tfrac{i}{2}\theta(s,\cdot)}g(s-\cdot)\right\rangle,\quad
(g\twast_\theta v)(s)\coloneq \left\langle
v,e^{-\tfrac{i}{2}\theta(s,\cdot)}g(s-\cdot)\right\rangle,
 \label{2.3}
\end{equation}
are well defined for any $g\in E$ and  any $v\in E'$. We call them
the twisted convolution products of the function $g$ with the
functional $v$.

\begin{remark}\label{R1} The symplectic form entering in
the definition of the Weyl-Heisenberg group is usually assumed to
be standard and determined by the matrix $J=\left(\begin{matrix}
0&I_n\\-I_n&0\end{matrix}\right)$, where $2n=d$. But for any other
form  $\theta$, the group is obviously the same up to an
isomorphism because every symplectic space has a symplectic
basis~\cite{A}.  In the Wigner-Weyl representation, or phase-space
formulation of quantum mechanics, the twisted convolution is
determined by the matrix  $\hbar J$, where the Planck constant
$\hbar$ plays the role of a noncommutative deformation parameter.
\end{remark}

Although operators~\eqref{2.1} and~\eqref{2.2} and also
functions~\eqref{2.3} carry the subscript $\theta$, we  omit it in
what follows if it cannot cause confusion. We  also assume that
$E$ is a dense vector subspace of the Schwartz space $S$, although
definition~\eqref{2.3} is meaningful  even in a more general
situation, for example, when $E$ is continuously and densely
embedded into $L^2$ and  there is  hence a canonical embedding
$E\to E'$ and formulas~\eqref{2.3} extend the initial twisted
convolution operation on the elements of  $E$. As usual, we let
$C(\oR^d)$ denote the space of all continuous functions on $\oR^d$
and give it the topology of uniform convergence on compact sets.

\begin{proposition}\labelP{1}
Let $E$ be a locally convex space of complex-valued functions on
$\oR^d$. If the reflection transformation $g\to\check g$ and
operators~\eqref{2.1} and \eqref{2.2} act continuously on $E$,
then functions~\eqref{2.3} belong to $C(\oR^d)$. If, in addition,
$E$ is barrelled, then the maps $(g, v)\to v\twast g$ and $(g,
v)\to g\twast v$ from $E\times E'$ to $C(\oR^d)$ are separately
continuous.
\end{proposition}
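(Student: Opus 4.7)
The plan is to rewrite both twisted convolution products in terms of the group orbit, after which the statement reduces to standard facts about continuous representations on barrelled locally convex spaces. Using $\check g(t-s)=g(s-t)$, a short calculation shows that
\[
(v\twast g)(s)=\langle v,\tau_{\theta,s}\check g\rangle,\qquad (g\twast v)(s)=\langle v,\bar\tau_{\theta,s}\check g\rangle.
\]
Invariance under reflection guarantees $\check g\in E$, and interpreting the hypothesis on~\eqref{2.1} and~\eqref{2.2} as continuity of the associated action of the Weyl--Heisenberg group on $E$ (the convention used throughout the paper), the orbit $s\mapsto\tau_{\theta,s}\check g$ is continuous from $\oR^d$ into $E$. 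Composition with the continuous functional $v\in E'$ then produces a continuous scalar function, establishing the first assertion; the argument for $g\twast v$ is identical with $\bar\tau$ in place of $\tau$.

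For the separate continuity claim I treat the variables one at a time. Fix $v\in E'$ and a compact set $K\subset\oR^d$, and consider the continuous linear functionals $v_s(h)\coloneq\langle v,\tau_{\theta,s}h\rangle$ on $E$. For each $h\in E$ the map $s\mapsto v_s(h)$ is continuous on $K$, hence bounded. Barrelledness of $E$ then permits the use of Banach--Steinhaus, which upgrades pointwise boundedness to equicontinuity: there is a continuous seminorm $p$ on $E$ with $|v_s(h)|\le p(h)$ uniformly in $s\in K$. Combined with the continuity of $g\mapsto\check g$, this yields $\sup_{s\in K}|(v\twast g)(s)|\le p(\check g)$, proving that $g\mapsto v\twast g$ is continuous from $E$ into $C(\oR^d)$.

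Next fix $g\in E$ and a compact $K\subset\oR^d$. The set $A=\{\tau_{\theta,s}\check g:s\in K\}$ is the continuous image of $K$ in $E$, hence compact and in particular bounded. The inequality $\sup_{s\in K}|(v\twast g)(s)|\le\sup_{h\in A}|v(h)|$ then shows that $v\mapsto v\twast g$ is continuous when $E'$ carries the strong dual topology, since the right-hand side is precisely a defining seminorm of that topology. The same reasoning handles $(g,v)\mapsto g\twast v$.

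The one genuinely nontrivial ingredient is the Banach--Steinhaus step, and that is exactly where the barrelledness hypothesis is essential: without it, pointwise boundedness of the family $\{v_s:s\in K\}$ would not translate into the uniform seminorm bound needed for continuity into $C(\oR^d)$ with its topology of uniform convergence on compact sets. Everything else is a formal consequence of the continuity of the group action together with the definition of the strong dual topology.
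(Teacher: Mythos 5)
Your proof is correct and follows essentially the same route as the paper: rewrite the products as $\langle v,\tau_s\check g\rangle$ and $\langle v,\bar\tau_s\check g\rangle$, get continuity in $s$ from the continuity of the group action, use boundedness of the compact orbit for continuity in $v$, and invoke Banach--Steinhaus via barrelledness for continuity in $g$. The only cosmetic difference is that you apply Banach--Steinhaus to the pointwise bounded family of scalar functionals $v\circ\tau_s$ rather than to the operator family $\{\tau_s\}_{s\in K}$ itself, which is the same argument.
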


\begin{proof} Formulas~\eqref{2.3} can be rewritten as
\begin{equation}
(v\twast )(s)=\langle v,\tau_s \check g\rangle,\qquad (g\twast
v)(s)=\langle v,\bar\tau_s \check g\rangle.
 \notag
\end{equation}
The continuity of the action of  $\tau_s$ on $E$ means that  the
map $\oR^d\times E\to E\colon (s,g)\to\tau_sg$ is continuous.
Hence, functions~\eqref{2.3} are obviously continuous in $s$ under
our assumptions.  If $g$ is fixed and $s$ ranges over a compact
set $K$ in $\oR^d$, then $\tau_s\check g$ ranges over a compact
set $Q$ in $E$. Therefore, the map $v\to v\twast g$ is continuous
in the strong topology of $E'$ by its definition as the topology
of uniform convergence on  bounded sets. Analogously, the map
$v\to g\twast v$ is  continuous. Now let $v$ be fixed. The set of
operators $\tau_s$, $s\in K$, is pointwise bounded; hence, if $E$
is barrelled, then this set is equicontinuous by the general
Banach-Steinhaus theorem (Sec.~III.4.2 in \cite{Sch}). Therefore,
for any neighborhood  $U$ of the origin in $E$, there exists a
neighborhood  $V$ such that $\tau_s\check g\in U$ for all $g\in V$
and for any $s\in K$. Taking $U$ to be a neighborhood on which $v$
is bounded by a number $\epsilon>0$, we conclude that the map
$g\to v\twast g$ is continuous. It can be  similarly verified that
the map $g\to g\twast v$ is continuous. The proposition is proved.
\end{proof}

\begin{remark}\label{R2} Our main theorems concern complete nuclear
spaces. Such spaces are barrelled if and only if they are
reflexive and then they are Montel spaces~\cite{Sch}. In the
appendix we prove~\refL{1}, which gives easily verifiable
sufficient conditions for the continuity of the action of
topological transformation groups  on  Montel function spaces.
\end{remark}

The operators $\tau_s$ are automorphisms of  $C(\oR^d)$ and it is
easily seen that the linear map  $L_v\colon g\to v\twast g$
commutes with  them. Analogously, the map $R_v\colon g\to g\twast
v$ commutes with  all operators $\bar\tau_s$.

\begin{proposition}\labelP{2}
Let $E$ satisfy the conditions of~\refP{1} and let $A$ be a
continuous linear map from $E$ to $C(\oR^d)$. If $A$ commutes with
all the operators $\tau_s$ (with all $\bar\tau_s$), then there
exists a unique functional $v\in E'$ such that $Ag=v\twast g$
$(Ag=g\twast v)$ for all $g\in E$.
\end{proposition}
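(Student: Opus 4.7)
The plan is to reconstruct the functional $v$ from the value of $Ag$ at the origin and then recover the rest of $Ag$ from the intertwining property. Concretely, I would define
\[
\langle v,g\rangle \coloneq (A\check g)(0),\qquad g\in E.
\]
This candidate is forced by the formula $(v\twast g)(0)=\langle v,\check g\rangle$, which is obtained by setting $s=0$ in~\eqref{2.3}. To see that $v\in E'$, note that $v$ is the composition of three continuous linear maps: the reflection $g\mapsto\check g$ on $E$ (continuous by hypothesis on $E$), the map $A\colon E\to C(\oR^d)$, and evaluation at $0\in\oR^d$, which is continuous on $C(\oR^d)$ since $\{0\}$ is compact.

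The key computational step is to rewrite the twisted-shift-applied-to-reflection in the form that matches~\eqref{2.3}. A direct check using~\eqref{2.1} gives
\[
(\tau_s\check g)(t)=e^{\tfrac{i}{2}\theta(s,t)}\check g(t-s)=e^{\tfrac{i}{2}\theta(s,t)}g(s-t),
\]
so the integrand in the first formula of~\eqref{2.3} is precisely $\tau_s\check g$. Hence the function $h_s(t)\coloneq e^{\tfrac{i}{2}\theta(s,t)}g(s-t)$ whose value is tested against $v$ at the point $s$ satisfies $\check h_s=\tau_{-s}g$, using the skew-symmetry of $\theta$. Therefore
\[
(v\twast g)(s)=\langle v,h_s\rangle=(A\check h_s)(0)=\bigl(A(\tau_{-s}g)\bigr)(0).
\]

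At this point I would invoke the intertwining hypothesis $A\tau_{-s}=\tau_{-s}A$, which yields
\[
(v\twast g)(s)=(\tau_{-s}Ag)(0)=e^{\tfrac{i}{2}\theta(-s,0)}(Ag)(0-(-s))=(Ag)(s),
\]
as required. Uniqueness is immediate: if $v,v'\in E'$ both represent $A$, then in particular $\langle v,\check g\rangle=(v\twast g)(0)=(v'\twast g)(0)=\langle v',\check g\rangle$ for all $g\in E$, and since reflection is a bijection on $E$, we conclude $v=v'$. The case of $R_v$ is treated identically, replacing $\tau$ by $\bar\tau$ throughout; the same formula $\langle v,g\rangle=(A\check g)(0)$ defines the functional, and the computation now uses $\check k_s=\bar\tau_{-s}g$ for $k_s(t)\coloneq e^{-\tfrac{i}{2}\theta(s,t)}g(s-t)$, together with $A\bar\tau_{-s}=\bar\tau_{-s}A$. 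The proof is essentially a formal manipulation, so there is no serious obstacle; the only thing to watch is the correct bookkeeping of signs in the phase factors and the identification $h_s=\tau_s\check g$.
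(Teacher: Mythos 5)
Your proposal is correct and follows essentially the same route as the paper: both define $v$ by $\langle v,g\rangle=(A\check g)(0)$, check its continuity, and then propagate the identity $(Ag)(0)=(v\twast g)(0)$ to all points of $\oR^d$ via the commutation of $A$ with the twisted shifts. The only cosmetic difference is that you compute $\check h_s=\tau_{-s}g$ and apply the defining formula of $v$ directly, whereas the paper evaluates at $-s$ and uses the commutation of $L_v$ with $\tau_s$; the sign and phase bookkeeping in your version checks out.
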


\begin{proof}
The argument is analogous to that used for Theorem~4.2.1
in~\cite{H1}, where the ordinary convolution and the space
$E=C^\infty_0(\oR^d)$ were considered. We consider the case where
$A$ commutes with all  $\tau_s$. If such a functional $v$ exists,
then  $\langle v, \check g\rangle=(Ag)(0)$ for all $g\in E$ and is
hence  unique. We let $v$ denote the linear form $g\to (A\check
g)(0)$. Clearly,  $v\in E'$ because the operator $A$ and the
transformation $g\to\check g$ are continuous. Furthermore,
$(v\twast g)(0)=\langle v, \check g\rangle=(Ag)(0)$. Using the
commutativity of $A$ with $\tau_s$, we obtain
\begin{equation}
(Ag)(-s)=(\tau_s Ag)(0)=(A\tau_s g)(0)=(v\twast\tau_s
g)(0)=(v\twast g)(-s)\quad\text{for all $s\in\oR^d$}.
 \notag
\end{equation}
Therefore, $Ag=v\twast g$ for all $g\in E$. The second case can be
treated in the same way. The proposition is proved.
\end{proof}

\section{Twisted convolution multipliers}
\label{S3}
 In what follows, we assume that  $E$ consists of
continuous functions and the embedding $E\to C(\oR^d)$ is
continuous.  The conditions in Proposition~\ref{P1} are also
assumed to be satisfied. We define the spaces ${\mathcal
C}_{\theta,L}(E)$ and ${\mathcal C}_{\theta,R}(E)$ of left and
right $\twast$-multipliers for $E$ as consisting of all
functionals $v\in E'$ such that respectively $v\twast_\theta g\in
E$ and  $g\twast_\theta v\in E$ for each $g\in E$ and, in
addition,  the maps $g\to v\twast_\theta g$ and $g\to
g\twast_\theta v$ are continuous.

If  the closed graph theorem is applicable to  $E$, then the
continuity requirement is  satisfied automatically here and can be
omitted from the definition because these maps have closed graphs.
Indeed, if $g_\gamma\to g$ and $v\twast g_\gamma\to h$ in $E$,
then $v\twast g_\gamma\to h$ in $C(\oR^d)$ and  $v\twast g= h$
by~\refP{1}. It follows from~\refP{2} that there is a one-to-one
correspondence between the elements of ${\mathcal C}_L(E)$ and the
continuous linear maps  $E\to E$ commuting with all $\tau_s$.
Letting $\mathcal  L(E)$ denote the algebra of continuous linear
operators on $E$ equipped with the topology of uniform convergence
on  bounded sets,  we conclude that ${\mathcal C}_L(E)$ can be
identified with its closed subalgebra. If $v_1,v_2\in {\mathcal
C}_L(E)$, then we let $v_1\twast v_2$ denote the element of
${\mathcal C}_L(E)$ corresponding to the composition of operators
$g\to v_1\twast g$ and $g\to v_2\twast g$ by Proposition~\ref{P2}.
Explicitly, we have
\begin{equation}
\langle v_1\twast v_2,\check g\rangle= (v_1\twast(v_2\twast
g))(0),\quad g\in E.
 \label{3.1}
\end{equation}
The natural topology of  ${\mathcal C}_L(E)$ is the topology
induced by that of ${\mathcal L}(E)$. Analogously, ${\mathcal
C}_R(E)$ is identified with the subalgebra of operators that
commute with all $\bar\tau_s$. For the elements of ${\mathcal
C}_R(E)$, the twisted convolution product is defined by
\begin{equation}
\langle v_1\twast v_2,\check g\rangle= ((g\twast v_1)\twast
v_2)(0),\quad g\in E.
 \label{3.2}
\end{equation}
If  $E$ has the structure of a topological algebra with respect to
product~\eqref{1.1}, then the algebras ${\mathcal C}_L(E)$ and
${\mathcal C}_R(E)$ are its extensions. Both of them are unital
with the Dirac $\delta$-function as their identity. Every
$\twast$-multiplier of $E$  by duality determines the
corresponding operations on the dual space $E'$, namely
\begin{equation}
\langle w\twast v_1,g\rangle= \langle w, \check v_1\twast
g\rangle,\,\, \langle v_2\twast w,g\rangle= \langle w, g\twast
\check v_2\rangle,\quad w\in E', v_1\in {\mathcal C}_L(E), v_2\in
{\mathcal C}_R(E),
 \label{3.3}
\end{equation}
which further  extends the initial operation on $E$.

If  $E$ is invariant under complex conjugation, then the
involution $g\to g^*$ in $E$ induces an involution of  $E'$. In
this case $(v\twast g)^*=g^*\twast v^*$  and  there is hence a
canonical antilinear isomorphism between  ${\mathcal C}_L(E)$ and
${\mathcal C}_R(E)$. We now consider the intersection
\begin{equation}
{\mathcal C}_\theta(E)={\mathcal C}_{\theta, L}(E)\cap{\mathcal
C}_{\theta, R}(E)
 \label{3.4}
\end{equation}
The natural topology of ${\mathcal C}_\theta(E)$ is the least
upper bound of the topologies induced by those of  ${\mathcal
C}_{\theta,L}(E)$ and ${\mathcal C}_{\theta,R}(E)$. If
$E\cap{\mathcal C}_\theta(E)$ is dense in ${\mathcal C}_\theta(E)$
in this topology and $E$ consists of bounded continuous integrable
functions, then ${\mathcal C}_\theta(E)$ has the canonical
structure of a unital involutive algebra  with respect to the
$\twast$-product, because  formulas~\eqref{3.1} and \eqref{3.2}
with $v_1,v_2\in {\mathcal C}_\theta(E)$ define the same
functional in this case. Indeed, using the Fubini theorem, we can
easily  verify that under these conditions and for $v_1,v_2\in E$,
the integrals representing  the right-hand sides of these formulas
coincide, and our assertion follows by passing to the limit.

\section{The main theorems}
\label{S4}
We must say a few words about the terminology used
below. Let $E$ be a complete nuclear locally convex space. If it
is a vector subspace of another locally convex space  $E_0$ and
the identical embedding $E\to E_0$  is continuous, we say that $E$
is a complete nuclear subspace of $E_0$. We need the following
auxiliary statement.

\begin{proposition}\labelP{3}
Let $E$ be a complete nuclear subspace of the Schwartz space
$S(\oR^d)$. Then $E\mathbin{\widetilde{\otimes}}E$ is a complete
nuclear subspace of $S(\oR^{2d})$.
\end{proposition}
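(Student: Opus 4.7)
The plan is to lift the given continuous inclusion $\iota\colon E\hookrightarrow S(\oR^d)$ to the tensor product level and invoke the Schwartz kernel theorem. By functoriality of the completed projective tensor product, $\iota$ induces a continuous linear map $\iota\mathbin{\widetilde{\otimes}}\iota\colon E\mathbin{\widetilde{\otimes}}E\to S(\oR^d)\mathbin{\widetilde{\otimes}}S(\oR^d)$, and composition with the canonical identification $S(\oR^d)\mathbin{\widetilde{\otimes}}S(\oR^d)\cong S(\oR^{2d})$ gives a continuous linear map
\[
\Phi\colon E\mathbin{\widetilde{\otimes}}E\longrightarrow S(\oR^{2d}).
\]
Completeness and nuclearity of $E\mathbin{\widetilde{\otimes}}E$ follow from the corresponding properties of $E$ by standard results on tensor products of nuclear spaces~\cite{Sch}, so the statement reduces to the injectivity of $\Phi$.

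To set up the injectivity argument I would work through point evaluations. Each $\delta_x$, $x\in\oR^d$, lies in $S(\oR^d)'$ and, by composition with $\iota$, also in $E'$. The bilinear form $(g,h)\mapsto g(x)h(y)$ on $E\times E$ is therefore continuous and extends uniquely to a continuous linear functional $\delta_x\otimes\delta_y$ on $E\mathbin{\widetilde{\otimes}}E$. Since $\Phi(g\otimes h)(x,y)=g(x)h(y)=\langle\delta_x\otimes\delta_y,g\otimes h\rangle$ and both sides are jointly continuous in $g\otimes h$, the identity $\Phi(T)(x,y)=\langle\delta_x\otimes\delta_y,T\rangle$ persists for every $T\in E\mathbin{\widetilde{\otimes}}E$. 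Thus $\Phi(T)=0$ is equivalent to the vanishing of every $\delta_x\otimes\delta_y$ on $T$.

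The main obstacle is to promote this to $T=0$. My plan is to invoke the operator form of the kernel theorem for the complete nuclear space $E$, which identifies $T\in E\mathbin{\widetilde{\otimes}}E$ with a continuous linear operator $\tilde T\colon E'_c\to E$ determined by $\langle v_1\otimes v_2,T\rangle=\langle v_2,\tilde T(v_1)\rangle$, the assignment $T\mapsto\tilde T$ being injective by nuclearity. The hypothesis reads $\tilde T(\delta_x)(y)=0$ for all $x,y\in\oR^d$, which since $E\subset S\subset C(\oR^d)$ consists of functions forces $\tilde T(\delta_x)=0$ in $E$ for every $x$. For the same reason the annihilator of $\{\delta_x:x\in\oR^d\}$ in $E$ is trivial, so the linear span of the point evaluations is $\sigma(E',E)$-dense in $E'$. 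Because the compact-open topology on $E'$ is compatible with the duality $(E,E')$, the continuity of the linear map $\tilde T\colon E'_c\to E$ implies weak continuity; its kernel is therefore $\sigma(E',E)$-closed, contains a $\sigma(E',E)$-dense subspace, and must equal $E'$. Hence $\tilde T=0$ and consequently $T=0$, which completes the proof of injectivity of $\Phi$ and identifies $E\mathbin{\widetilde{\otimes}}E$ with a complete nuclear subspace of $S(\oR^{2d})$.
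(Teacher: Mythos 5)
Your proof is correct and follows essentially the same route as the paper's: both reduce the statement to injectivity of the canonical map into $S(\oR^{2d})$, both rest on Grothendieck's identification of $E\mathbin{\widetilde{\otimes}}E$ with separately ($\sigma$-)continuous bilinear forms on $E'_\sigma\times E'_\sigma$ for complete nuclear $E$, and both conclude by testing against a family of functionals whose span is $\sigma(E',E)$-dense. The only difference is the choice of that family --- you use the point evaluations $\delta_x\otimes\delta_y$, while the paper uses $j'(S')\times j'(S')$ for the transpose $j'$ of the embedding $j\colon E\to S(\oR^d)$ --- which does not change the substance of the argument.
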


\begin{proof}
We recall that the completed projective tensor product of nuclear
spaces is nuclear~\cite{Sch}. Because
$S(\oR^{2d})=S(\oR^d)\mathbin{\widetilde{\otimes}}S(\oR^d)$, there
is a natural continuous linear map $\iota\colon
E\mathbin{\otimes}E\to S(\oR^{2d})$, and we need only  show that
its extension $\tilde \iota$ by continuity to
$E\mathbin{\widetilde{\otimes}}E$ is injective.  If $E$ is
complete and nuclear, then by  the Grothendieck theorem
(Sec.~IV.9.4 in ~\cite{Sch}) the space
$E\mathbin{\widetilde{\otimes}}E$ is identified with the space
$\mathcal B_e(E'_\sigma, E'_\sigma)$ of separately continuous
bilinear forms on $E'_\sigma\times E'_\sigma$, where the
subscripts $\sigma$ and $e$  respectively mean that  $E'$ is
equipped with the weak topology and that $\mathcal B$ is equipped
with the topology of biequicontinuous convergence. The canonical
map $E\times E\to B_e(E'_\sigma, E'_\sigma)$ takes each pair of
functions $(f,g)$ to the bilinear form
\begin{equation}
(f\otimes g)(u, v)= \langle u,f\rangle\langle v, g\rangle,\qquad
u, v\in E'.
 \label{4.1}
 \end{equation}
We let $j'$ denote the transpose of the embedding $j\colon E\to
S(\oR^d)$. Applying the above-noted identification also to the
Schwartz space, we see that  $\tilde \iota$ maps the bilinear form
$\mathbf b\in\mathcal B_e(E'_\sigma, E'_\sigma)$ to the bilinear
form on $S'\times S'$ whose value on a pair of distributions
$u,v\in S'$ is $\mathbf b(j'(u),j'(v))$. Because $j$ is injective,
$j'(S')$ is dense in $E'_\sigma$ and  $\tilde \iota(\mathbf b)=0$
hence implies $\mathbf b=0$, which completes the proof.
\end{proof}

If the conditions of Proposition~\ref{P3} are satisfied, then we
use $E(\oR^{2d})$ and $E\mathbin{\widetilde{\otimes}}E$
equivalently to denote  the same space. A simple proof of the
kernel theorem generalizing the famous Schwartz theorem concerning
$S(\oR^d)$ to any complete nuclear barrelled space is given
in~\cite{S10JMP}. It develops a construction proposed by
Grothendieck (Theorem~13 in Chap.~2 in \cite{Grot}). In practice,
$E(\oR^{2d})$ is determined by the same restrictions as $E(\oR^d)$
but imposed on functions of the doubled number of variables, see
examples in Sec.~6.

It is well known that the (ordinary) convolution of any tempered
distribution with any test function in the Schwartz space  $S$ is
a multiplier of this space with respect to  pointwise
multiplication. The next theorem establishes an analogue of this
property for larger classes of generalized functions and shows
that it is preserved under the noncommutative deformation of
convolution.  For any locally convex space $E\subset S$, a
continuous function $\mu$ is called a pointwise multiplier of $E$
if  $\mu g\in E$ for all $g\in E$ and the map  $g\to \mu g$ is
continuous.\footnote{The continuity condition is automatically
satisfied if the closed graph theorem is applicable to $E$.} We
let  $M(E)$ denote the set of all such multipliers and give it the
topology induced by that of the operator algebra $\mathcal L(E)$.

\begin{theorem}\labelT{1}
Let $E$ be a complete nuclear barrelled subspace of $S(\oR^d)$ and
let $\theta(s, t)$ be a symplectic form on $\oR^d$. If
$e^{-\tfrac{i}{2}\theta}\in M(E\mathbin{\widetilde{\otimes}}E)$
and the involutive transformation $h(s,t)\to h(s,s-t)$ is a
continuous automorphism of
$E(\oR^{2d})=E\mathbin{\widetilde{\otimes}}E$, then $E$ is an
algebra under the $\twast_\theta$-product, this product is
separately continuous, the functions $v\twast_\theta g$ and
$g\twast_\theta v$ are well defined and belong to  $M(E)$ for any
$g\in E$ and  any $v\in E'$, and the maps $(g,v)\to v\twast_\theta
g$ and $(g,v)\to g\twast_\theta v$ from $E\times E'$ into $M(E)$
are also separately continuous.
\end{theorem}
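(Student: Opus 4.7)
The plan is to realize each twisted-convolution product as a three-step composition of continuous operations on the completed tensor product $E\mathbin{\widetilde{\otimes}}E=E(\oR^{2d})$: first, formation of a rank-one tensor; second, a combination of multiplication by the symplectic phase $e^{-\tfrac{i}{2}\theta(s,t)}$ with a linear coordinate substitution, both of which preserve $E(\oR^{2d})$ by the two hypotheses of the theorem; third, partial evaluation $\mathrm{id}\otimes w$, which is a continuous linear map $E(\oR^{2d})\to E$ for any $w\in E'$ thanks to the nuclearity of $E$ and the universal property of the projective tensor product.

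For the multiplier assertions I would fix $g\in E$ and $v\in E'$ and, for each $h\in E$, form $h\otimes g\in E\mathbin{\widetilde{\otimes}}E$, multiply by the pointwise multiplier $e^{-\tfrac{i}{2}\theta}$, and then apply the involutive automorphism $\iota\colon\phi(s,t)\mapsto\phi(s,s-t)$ supplied by the hypothesis. Using the skew-symmetry identity $\theta(s,s-t)=-\theta(s,t)$, the resulting function equals $\psi_{h,g}(s,t)=h(s)\,g(s-t)\,e^{\tfrac{i}{2}\theta(s,t)}$, so $\psi_{h,g}\in E(\oR^{2d})$. Partial evaluation $\mathrm{id}\otimes v$ then gives the function $h(s)(v\twast_\theta g)(s)\in E$, proving that $v\twast_\theta g\in M(E)$. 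The product $g\twast_\theta v$ is handled the same way except that the two operations in the second step are performed in the opposite order, so that the phase survives with the sign $-\tfrac{i}{2}\theta$ dictated by~\eqref{2.3}. Separate continuity of $(g,v)\mapsto v\twast_\theta g$ and $(g,v)\mapsto g\twast_\theta v$ from $E\times E'$ into $M(E)$ then follows from the continuity of every constituent map, supplemented by the barrelled-space / Banach--Steinhaus argument already used in the proof of~\refP{1}.

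For the algebra assertion I would show that the integrand $\Phi(s,t)=g_1(t)\,g_2(s-t)\,e^{\tfrac{i}{2}\theta(s,t)}$ of $(g_1\twast_\theta g_2)(s)$ lies in $E(\oR^{2d})$ and then recover $g_1\twast_\theta g_2$ by partial evaluation against the integration functional $1\in E'$, which is well defined because $E\hookrightarrow S(\oR^d)$ continuously. The function $\Phi$ is the image of $(g_1\otimes g_2)\cdot e^{-\tfrac{i}{2}\theta}$ under the linear substitution $\sigma\colon(s,t)\mapsto(t,s-t)$; this $\sigma$ factors as $r_t\circ\iota\circ\rho$, where $\rho$ is the coordinate swap (automatically a continuous automorphism of $E\mathbin{\widetilde{\otimes}}E$ by the symmetry of the tensor product in its two factors) and $r_t$ is reflection in the second variable (continuous by the standing hypothesis of \refS{3} that $g\mapsto\check g$ acts continuously on $E$). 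The identity $\theta(t,s-t)=-\theta(s,t)$ adjusts the sign once more so that the end product is exactly $\Phi$, and then $(\mathrm{id}\otimes 1)\Phi=g_1\twast_\theta g_2\in E$. Separate continuity of $\twast_\theta$ as a map $E\times E\to E$ follows from the same pattern.

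The main obstacle I expect is the factorization of the substitution $\sigma$ required by the algebra part: the theorem supplies only the single substitution $\iota$ acting on the second coordinate, and one has to observe that the composite $r_t\circ\iota\circ\rho$ is exactly what carries the tensor $g_1\otimes g_2$ onto the function $(s,t)\mapsto g_1(t)\,g_2(s-t)$, which is what the definition of $g_1\twast_\theta g_2$ calls for. Once this observation is in place, everything else is a routine chain of continuous operations on nuclear spaces, and the hypotheses of the theorem enter only through $\iota$ and through the fact that $e^{-\tfrac{i}{2}\theta}$ is a pointwise multiplier of $E(\oR^{2d})$.
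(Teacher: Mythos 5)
Your construction is, in substance, the paper's own: the author likewise realizes $f(s)e^{\pm\tfrac{i}{2}\theta(s,t)}g(s-t)$ as an element of $E(\oR^{2d})=E\mathbin{\widetilde{\otimes}}E$ using the two hypotheses, and then pairs it with a functional in one slot; the key lemma in both versions is that for $h\in E\mathbin{\widetilde{\otimes}}E$ and $v\in E'$ the partial evaluation $s\mapsto\langle v,h(s,\cdot)\rangle$ lands in $E$. The paper proves this via the Grothendieck identification of $E\mathbin{\widetilde{\otimes}}E$ with $\mathcal B_e(E'_\sigma,E'_\sigma)$ and the formula $h(s,t)=\mathbf h(\delta_s,\delta_t)$; you should make that identification explicit rather than invoking only ``the universal property,'' since it is what converts the abstract map $\mathrm{id}\otimes v$ into the concrete pointwise formula. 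Your sign bookkeeping and the factorization $\sigma=r_t\circ\iota\circ\rho$ are correct; the paper avoids the coordinate swap by also establishing partial evaluation in the first slot, and it \emph{derives} the continuity of $g\to\check g$, of translations and of the twisted shifts from the theorem's hypotheses instead of importing them as standing assumptions, which keeps the statement self-contained.

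The genuine gap is the continuity of $v\mapsto v\twast_\theta g$ (for fixed $g$) from the strong dual $E'$ into $M(E)\subset\mathcal L(E)$ carrying the topology of bounded convergence. This does not ``follow from the continuity of every constituent map'': the constituent involving $v$ is precisely $v\mapsto\mathrm{id}\otimes v$, and its continuity from $E'$ with the strong topology into operators with the bounded-convergence topology is the point at issue. Nor does the Banach--Steinhaus argument of Proposition~\ref{P1} help, since there it yields equicontinuity in the \emph{other} variable and continuity only into $C(\oR^d)$, where continuity in $v$ is immediate from the definition of the strong topology as uniform convergence on bounded sets. The paper needs a separate two-step argument at this point: first, sequential continuity of $v\mapsto v\twast_\theta g$ into $\mathcal L(E)$ (convergence in the simple topology is upgraded to bounded convergence because $E$ is a Montel space); second, the fact that $E'$ is bornological --- obtained by writing the complete nuclear space $E$ as a reduced projective limit of Hilbert spaces, so that $E'$ is an inductive limit of normed spaces --- whence sequential continuity implies continuity. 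Without some such argument (or a proved hypocontinuity statement for the pairing $\mathcal B_e(E'_\sigma,E'_\sigma)\times E'\to E$), the separate-continuity claim in the $E'$ variable remains unsupported.
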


\begin{proof}
We first show  that if  $h\in E(\oR^{2d})$ and $s$ is fixed, then
$h(s,t)$ regarded as a function of the variable $t$ belongs to
$E$. As in the proof of Proposition~\ref{P3}, we identify  $h$
with a bilinear separately continuous form $\mathbf h$ on
$E'_\sigma\times E'_\sigma$. We note that
\begin{equation}
h(s,t)= \mathbf h(\delta_s,\delta_t),
 \label{4.2}
\end{equation}
where the functional  $\delta_s\in E'$ is defined by $\langle
\delta_s, f\rangle=f(s)$. Indeed, \eqref{4.1} implies~\eqref{4.2}
for all linear combinations of functions of the form $h=f\otimes
g$ and~\eqref{4.2} holds by continuity for any element of
$E(\oR^{2d})$. The linear map $E'\to\oC\colon v\to \mathbf
h(\delta_s, v)$ is continuous in the weak topology $\sigma(E',E)$.
Therefore, there exists a function  $h_s\in E=(E'_\sigma)'$ such
that $\langle v, h_s\rangle=\mathbf h(\delta_s, v)$ for all $v\in
E'$. Substituting  $v=\delta_t$ here, we obtain $h_s(t)=h(s,t)$.
Clearly, the map $h\to h_s$ is continuous if  $E(\oR^{2d})$ and
$E$ are endowed with the weak topologies. Furthermore, it is easy
to see that the function $s\to \langle v,
h(s,\cdot)\rangle=\mathbf h(\delta_s, v)$ belongs to $E$. Indeed,
the linear map $u\to \mathbf h(u, v)$ is continuous in the
topology $\sigma(E',E)$ and  there  hence exists a function
$h_v\in E$ such that $\langle u, h_v\rangle=\mathbf h(u, v)$ for
all $u\in E'$. For $u=\delta_s$, this equality becomes
$h_v(s)=\langle v, h(s,\cdot)\rangle$. The map $E'\to E\colon v\to
h_v$ is also continuous in the weak topologies. Moreover, $h_v$
depends weakly continuously on $h$.

 We  apply the above consideration to the function $h(s,t)=f(s)g(s-t)$,
where $f,g\in E$. By  the conditions in the theorem, it belongs to
$E(\oR^{2d})$. We fix a point $s\in \oR^d$ and choose  $f$ such
that $f(s)=1$.  By what was said above, the function  $t\to
g(s-t)$ belongs to $E$ and the map $E\to E\colon g(t)\to g(s-t)$
is weakly continuous. Every barrelled space is a Makkey space, it
hence follows that this map is continuous (Sec.~IV.7.4 in
\cite{Sch}). Because $s$ can be chosen arbitrarily, we see that
the coordinate reflection and translations are continuous
operators on $E$. Similar reasoning applied to
$f(s)e^{-\tfrac{i}{2}\theta(s, t)}g(s-t)$ shows that the function
$t\to e^{-\tfrac{i}{2}\theta(s, t)}g(s-t)$ belongs to $E$ and the
map $g(t)\to e^{-\tfrac{i}{2}\theta(s, t)}g(s-t)$ is continuous.
Therefore,  twisted shifts~\eqref{2.1} and \eqref{2.2} are also
continuous operators on $E$,  and the definitions of $v\twast g$
and $g\twast v$  in Sec.~\ref{S2} are applicable here.
Furthermore, because $f(s)\langle v,e^{\pm\tfrac{i}{2}\theta
(s,\cdot)}g(s-\cdot)\rangle\in E$ for any $f\in E$, the functions
$\langle v,\tau_s \check g\rangle$ and $\langle v,\bar\tau_s
\check g\rangle$ are continuous in $s$ for each $v\in E'$. Hence,
the sets  $\{\tau_s \check g \colon |s|\le1\}$ and $\{\bar\tau_s
\check g \colon |s|\le1\}$, where  $g$ is fixed, are weakly
bounded. These sets are then bounded in the original topology of
$E$ (Sec.~IV.3.2 in  \cite{Sch}) and  the corresponding
representations of the Weyl-Heisenberg group in $E$ are continuous
by~\refL{1} proved in the appendix. For any fixed $v$ and $g$, the
functions $f\cdot (v\twast g)$ and $f\cdot (g\twast v)$ are weakly
continuous in $f$, and  the multiplication by $v\twast g$ and by
$g\twast v$ are therefore continuous operators from $E$ to $E$. We
conclude that the twisted convolution products belong to $M(E)$.
Furthermore, again by what was said above, the function $t\to
\langle u,f(\cdot)e^{-\tfrac{i}{2}\theta(\cdot,
t)}g(t-\cdot)\rangle$ belongs to $E$ for any $u\in E'$. In
particular,
\begin{equation}
\int f(s)e^{-\tfrac{i}{2}\theta(s, t)}g(t-s)\,ds= f\twast g\in E.
\notag
\end{equation}
 Therefore, $E$ is an algebra under the
twisted convolution. Clearly, the maps $f\to f\twast g$ and $g\to
f\twast g$ are  weakly continuous and  are hence continuous in the
original topology of $E$.

It remains to prove that the maps $(g,v)\to v\twast g$ and
$(g,v)\to g\twast v$ from $E\times E'$ to $M(E)$ are separately
continuous. Let $v$ be fixed. We show that for each neighborhood
$U$ of the origin and for each bounded set $Q$ in $E$, there
exists a neighborhood  $W$ of the origin  such that
$f\cdot(v\twast g)\in U$ for all $f\in Q$ and for all $g\in W$. By
the definition of the topology  of biequicontinuous convergence, a
base of neighborhoods of the origin in
$E\mathbin{\widetilde{\otimes}}E=\mathcal B_e(E'_\sigma,
E'_\sigma)$ is formed by sets of the form
\begin{equation}
\mathcal U_{\,U,V}=\left\{\mathbf b\in \mathcal B(E'_\sigma,
E'_\sigma)\colon \sup_{u\in U^\circ, v\in V^\circ}|\mathbf
b(u_,v)|\le 1\right\} ,
 \label{4.3}
\end{equation}
where  $U$ and $V$ run over a basis of absolutely convex closed
neighborhoods of zero in  $E$, and  $U^\circ$, $V^\circ$ are their
polars, i.e., $U^\circ=\{u\in E'\colon \sup_{f\in U}|\langle
u,f\rangle|\le 1\}$. The functional $v$ is bounded by unity on a
zero-neighborhood which we take as   $V$ in~\eqref{4.3}. By the
conditions of the theorem, for any  $\mathcal U_{\,U,V}$, we can
find neighborhoods $U_1$ and $V_1$ such that
\begin{equation}
h(s,t)=f(s)e^{\tfrac{i}{2}\theta(s, t)}g(s-t)\in \mathcal
U_{\,U,V} \notag
\end{equation}
 for all $f\in U_1$ and $g\in V_1$. For the bounded set
$Q$, there exists $\delta>0$ such that $Q\subset \delta U_1$. We
set $W=\delta V_1$. Then for all $u\in U^\circ$, $f\in Q$, and
$g\in W$, we have
\begin{equation}
|\mathbf h(u,v)|=|\langle u,f(v\twast g)\rangle|\le 1 .
 \notag
\end{equation}
In another words, $f\cdot(v\twast g)\in U^{\circ\circ}=U$, which
proves the continuity of the map $E\to M(E)\colon g\to v\twast g$.

Now let $g$ be fixed and a sequence $v_n$ tend to zero in $E'$.
Then the sequence of multipliers $v_n\twast g\to 0$ in the
topology of simple convergence on $L(E)$ and hence also in the
topology of bounded convergence (see Sec.~III.4.6 in~\cite{Sch})
because $E$ is a Montel space as noted in Remark~\ref{R2}. Hence,
the map $E'\to \mathcal L(E)\colon v\to v\twast g$ is sequentially
continuous. It remains to show that  $E'$ is a bornological space
because for such spaces the sequential continuity of a linear map
into an arbitrary locally convex space implies its continuity
(Sec.~II.8.3 in~\cite{Sch}). For this, we recall that every
complete nuclear space is representable as the projective limit of
a suitable family of Hilbert spaces $E_\gamma$. The projective
limit can be assumed to be reduced and the dual space  $E'$  with
the Mackey topology $\tau(E',E)$ is then the inductive limit of
the family of spaces $E'_\gamma$ equipped with the Makkey
topologies $\tau(E'_\gamma,E_\gamma)$ (Sec.~IV.4.4 in~\cite{Sch}).
Because the Montel and Hilbert spaces are reflexive, the strong
topology of their duals coincides with the Makkey topology.
Therefore, $E'$ is bornological as an  inductive limit of normed
spaces. We conclude the the map $E'\to M(E)\colon v\to v\twast g$
is continuous. A similar argument for $g\twast v$ completes the
proof.
\end{proof}

\begin{corollary}\labelC{1} If the conditions of Theorem~\ref{T1} are
satisfied, then definition~\eqref{2.3} is equivalent to the
definition of the products  $v\twast g$ and $g\twast v$ by
duality, and the relations
\begin{equation}
(v\twast g)\twast f=v\twast(g\twast f),\quad (g\twast v)\twast f=
g\twast (v\twast f),\quad g\twast(f\twast v)=(g\twast f)\twast v.
 \label{4.4}
\end{equation}
hold for all $v\in E'$, and  $g,f\in E$.
\end{corollary}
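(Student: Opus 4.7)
The statement has two independent parts: the equivalence of the pointwise definition~\eqref{2.3} with the duality definition induced by~\eqref{3.3}, and the three associativity relations. My plan is to establish the equivalence first and derive the associativity identities from it.

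For the equivalence, I would observe that by~\refT{1} the function $v\twast g$ belongs to $M(E)\subset E'$ (the embedding being via integration, since $E\subset S\subset L^1$), so both constructions yield elements of $E'$ and it suffices to compare their values on test functions. For arbitrary $f\in E$, the pointwise formula gives
\begin{equation*}
\langle v\twast g, f\rangle=\int f(s)\,(v\twast g)(s)\,ds=\int f(s)\,\langle v,\tau_s\check g\rangle\,ds.
\end{equation*}
I would then interchange the integral with $v$ to obtain $\langle v, F\rangle$, where $F(t)=\int f(s)\,e^{i\theta(s,t)/2}\,g(s-t)\,ds$, and verify by a direct change of variables that $F=\check g\twast f$. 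Comparison with~\eqref{3.3} yields the required identification. The companion identity $\langle g\twast v, f\rangle=\langle v, f\twast\check g\rangle$ is proved by the same argument.

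Given the equivalence, each of the three associativity relations in~\eqref{4.4} is established by pairing both sides with a test function $h\in E$ and reducing to the associativity of $\twast$ on $E$, which holds since $E$ is an algebra by~\refT{1}. Writing $\phi^\vee(s)=\phi(-s)$ for the reflection, I would apply~\eqref{3.3} with $w=v\twast g\in E'$ and $v_1=f\in E\subset{\mathcal C}_L(E)$, together with the equivalence just proved, to obtain
\begin{equation*}
\langle (v\twast g)\twast f, h\rangle=\langle v\twast g,\check f\twast h\rangle=\langle v,\check g\twast(\check f\twast h)\rangle,
\end{equation*}
while on the other side
\begin{equation*}
\langle v\twast(g\twast f), h\rangle=\langle v,(g\twast f)^\vee\twast h\rangle=\langle v,(\check g\twast\check f)\twast h\rangle,
\end{equation*}
where the last step uses the reflection identity $(g\twast f)^\vee=\check g\twast\check f$ (a one-line change of variables). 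Associativity of $\twast$ in $E$ then equates the right-hand sides. The remaining two identities in~\eqref{4.4} follow by the same recipe, using in addition the companion duality formula for $g\twast v$.

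The main obstacle is the interchange of the integral with $v$ in the first step. Under the hypotheses of~\refT{1}, the function $h(s,t)=f(s)\,e^{i\theta(s,t)/2}\,g(s-t)$ lies in $E\mathbin{\widetilde{\otimes}}E=E(\oR^{2d})$, and the arguments given in the proof of~\refT{1} show that both $s\mapsto\langle v, h(s,\cdot)\rangle$ and $t\mapsto\int h(s,t)\,ds$ are elements of $E$ depending continuously on $h$. The desired identity then equates two continuous linear functionals on $E\mathbin{\widetilde{\otimes}}E$ that agree on elementary tensors, and extends to the whole space by density.
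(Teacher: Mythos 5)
Your proposal is correct, but it reaches the conclusion by a genuinely different route from the paper's in both halves. For the equivalence of the two definitions, the paper first notes that $\int(v\twast g)f\,dt=\langle v,\check g\twast f\rangle$ holds for $v\in E$ (where it is an instance of Fubini's theorem), then proves that $E$ is dense in $E'$ (via the ``sufficiently many functions'' criterion of Gel'fand--Shilov, using that $E$ contains all translates and modulations of its elements), and finally passes to the limit along an approximating net $v_\gamma\to v$, using the continuity of $v\mapsto v\twast g$ from $E'$ to $M(E)$ established in \refT{1} together with the continuity and injectivity of the embedding $M(E)\to E'$. You instead fix $v\in E'$ and approximate the kernel $h(s,t)=f(s)e^{i\theta(s,t)/2}g(s-t)$ by elements of $E\otimes E$: under the Grothendieck identification $E\mathbin{\widetilde{\otimes}}E\cong\mathcal B_e(E'_\sigma,E'_\sigma)$ the two functionals $h\mapsto\int\langle v,h(s,\cdot)\rangle\,ds$ and $h\mapsto\langle v,\int h(s,\cdot)\,ds\rangle$ are both evaluations $\mathbf h\mapsto\mathbf h(u,v)$ with $u$ the integration functional, hence continuous on $\mathcal B_e$ and equal on the dense subspace spanned by elementary tensors. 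This is a legitimate and self-contained justification of the interchange that reuses exactly the machinery already set up in the proof of \refT{1}, and it avoids having to prove density of $E$ in $E'$. For the relations~\eqref{4.4} the paper simply passes to the limit in the associativity law of $(E,\twast)$ along the same approximation of $v$; your derivation by repeated application of the duality formulas together with the reflection identity $(g\twast f)^\vee=\check g\twast\check f$ is more explicit and equally valid.

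One point you should make explicit: to pass from $\langle F_1,h\rangle=\langle F_2,h\rangle$ for all $h\in E$ to the function-level identities in~\eqref{4.4}, and indeed to speak of an \emph{embedding} $M(E)\subset E'$ at all, you need the injectivity of the map $m\mapsto(f\mapsto\int mf)$, i.e., that integration against $E$ separates continuous functions. This is not automatic for an arbitrary subspace of $S$; the paper derives it from the fact, established in the proof of \refT{1}, that $E$ contains all translates and all modulations $e^{i\theta(s,\cdot)/2}g$ of its elements, combined with Sec.~IV.8.4 of~\cite{GS2}. With that sentence supplied, your argument is complete.
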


\begin{proof}
For any topological algebra $(E,\twast)$, the twisted convolution
of a function $g\in E$ with a functional $v\in E'$ is defined by
duality by the formulas
\begin{equation}
\langle v\twast g,f\rangle= \langle v, \check g\twast
f\rangle,\,\, \langle g\twast v,f\rangle= \langle v, f\twast
\check g\rangle,\quad f\in E.
 \label{4.5}
\end{equation}
The maps $v\to v\twast g$ and $v\to g\twast v$ from $E'$ to $E'$
are then continuous because they are the transposes  of the
continuous maps $f\to\check g\twast f$ and $f\to f\twast\check g$.
For $v\in E$, the right-hand sides of the equalities
in~\eqref{4.5} are easily seen to coincide with the result of
integrating functions~\eqref{2.3} multiplied by the test function
$f$. We must show that this also holds for any $v\in E'$. For
this, we note that under the conditions in Theorem~\ref{T1}, the
space $E$ is dense in  $E'$. Indeed, as shown in the proof
of~\refT{1},  along with a function $g(t)$, $E$ contains all the
shifted functions $g(t-s)$ and also all the products
$e^{\tfrac{i}{2}\theta(s, t)}g(t)$, where $s\in \oR^d$. According
to Sec.~IV.8.4 in~\cite{GS2}, it follows that $E$ has sufficiently
many functions in the following sense. Let a function $\varphi$ be
locally integrable. Then the condition that the integral $\int
\varphi(t)f(t)\,dt$ exists and is zero for all $f\in E$ implies
that $\varphi(t)\equiv 0$. Therefore, the canonical map $E\to E'$
is injective, and since $E$ is reflexive, this map has a closed
range by the Hahn-Banach theorem, as stated above. Further, each
multiplier $m\in M(E)$ defines a functional $\mu\in E'$ by the
formula $\langle \mu,f\rangle=\int m(t)f(t)\,dt$, and the map
\begin{equation}
M(E)\to E'\colon m\to \mu.
 \label{4.6}
\end{equation}
is injective by the same reasoning and is continuous by the
definition of topology in these spaces. Approximating the
functional $v\in E'$ by functions $v_\gamma\in E$, using the
continuity of the maps $v\to v\twast g$, $v\to g\twast v$ from
$E'$ to $M(E)$, and passing to the limit in the equality
\begin{equation}
\int (v_\gamma\twast g)(t)f(t)dt=\langle v_\gamma, \check g\twast
f\rangle
 \notag
\end{equation}
 and in an analogous equality for $g\twast v_\gamma$, we
conclude that map~\eqref{4.6} establishes a one-to-one
correspondence between the functions defined by~\eqref{2.3} and
the functionals defined by~\eqref{4.5}. Formulas~\eqref{4.4}
follow by continuity from the associativity of the algebra $(E,
\twast)$, which completes the proof.
\end{proof}

We now note that, under the conditions of Theorem~\ref{T1}, the
space $E$ is an algebra under  pointwise multiplication. Indeed,
for any $f,g\in E$, the function $f(s)g(s-t)|_{t=0}$ belongs to
$E$, and it follows from the proof of Theorem~\ref{T1} that the
map $(f,g)\to f\cdot g$ is separately continuous. Hence, there is
a canonical embedding  $E\to M(E)$. We assume below that $E$ is
dense in $M(E)$. This condition is generally assumed for the
multipliers of topological algebras and  is included in the
definition of $M(E)$ in~\cite{P}. If the density condition is
satisfied, then the algebra $M(E)$ is canonically isomorphic to
the closure in $\mathcal L(E)$ of the set of all operators of
multiplication by elements of $E$.

\begin{theorem}\label{T2} If the conditions of Theorem~\ref{T1} are
satisfied and  $E$ is dense in $M(E)$, then the space $M'(E)$ dual
to $M(E)$ is contained in $\mathcal C_\theta(E)$.
\end{theorem}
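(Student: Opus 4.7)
The plan is to exploit the reflexivity of $E$ and pass to transposes of the continuous maps provided by \refT{1}. Since $E$ is complete, nuclear and barrelled, Remark~\ref{R2} shows it is reflexive, so the strong bidual $E''$ coincides with $E$. Because $E$ is dense in $M(E)$, dualization yields a continuous injection $M'(E)\hookrightarrow E'$; hence every $\mu\in M'(E)$ lies in $E'$ and \refT{1} applies to give $\mu\twast_\theta g,\,g\twast_\theta \mu\in M(E)$ for all $g\in E$. The task is then to upgrade these memberships from $M(E)$ to $E$.

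For a fixed $g\in E$, \refT{1} provides continuous linear maps
\begin{equation}
L_g\colon E'\to M(E),\ v\mapsto g\twast_\theta v,\qquad R_g\colon E'\to M(E),\ v\mapsto v\twast_\theta g.
\notag
\end{equation}
Their strong transposes are continuous linear maps $L_g^t,R_g^t\colon M'(E)\to E''=E$. For $\mu\in M'(E)$, the element $L_g^t\mu\in E$ is characterized by $\langle v,L_g^t\mu\rangle=\langle \mu, g\twast_\theta v\rangle$ for all $v\in E'$. Evaluating at $v=\delta_s\in E'$ (well defined since $E\hookrightarrow C(\oR^d)$), a direct computation from~\eqref{2.3} yields $g\twast_\theta\delta_s=\tau_s g$, and hence $(L_g^t\mu)(s)=\langle \mu,\tau_s g\rangle=(\mu\twast_\theta\check g)(s)$ by \refP{1}. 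Thus $L_g^t\mu=\mu\twast_\theta\check g$ as continuous functions on $\oR^d$, and an analogous computation gives $R_g^t\mu=\check g\twast_\theta\mu$. Because the reflection $g\mapsto\check g$ preserves $E$ (as established in the proof of \refT{1}), replacing $g$ by $\check g$ delivers $\mu\twast_\theta g,\,g\twast_\theta\mu\in E$ for every $g\in E$ and $\mu\in M'(E)$.

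The continuity of the maps $g\mapsto\mu\twast_\theta g$ and $g\mapsto g\twast_\theta\mu$ from $E$ into $E$ then follows from the closed graph argument already recorded in Section~\ref{S3}: if $g_n\to g$ in $E$ and $\mu\twast_\theta g_n\to h$ in $E$, both sequences converge in $C(\oR^d)$ through $E\hookrightarrow C(\oR^d)$, and \refP{1} forces $h=\mu\twast_\theta g$. Combined with~\eqref{3.4}, this produces $M'(E)\subset\mathcal C_\theta(E)$.

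The hard part will be the identification $L_g^t\mu=\mu\twast_\theta\check g$ in the second step: the transpose construction produces an abstract element of the strong bidual $E''$, and one must verify that under the canonical identification $E''=E$ this element is represented by precisely the function defined pointwise by~\eqref{2.3}. This reduces to noting that for $\phi\in E\subset C(\oR^d)$ the evaluation $\langle\delta_s,\phi\rangle=\phi(s)$ picks out the pointwise value, together with the fact used in the proof of Corollary~\ref{C1} that $E$ has sufficiently many functions to separate its elements from those of $E'$.
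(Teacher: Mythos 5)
Your overall strategy coincides with the paper's: reflexivity of $E$, the continuous injection $M'(E)\hookrightarrow E'$ obtained by transposing the dense embedding $E\to M(E)$, and the transpose $L_g^t\colon M'(E)\to E''=E$ of the map $v\mapsto g\twast_\theta v$ supplied by \refT{1}. Your identification step is correct and in fact a little slicker than the paper's: testing $L_g^t\mu$ against $\delta_s$ and using $g\twast_\theta\delta_s=\tau_s g$ gives $(L_g^t\mu)(s)=\langle\mu,\tau_s g\rangle=(\mu\twast_\theta\check g)(s)$ directly, whereas the paper routes through the associativity relations of Corollary~\ref{C1} and the ``sufficiently many functions'' argument. (You should still say a word about why $E\to M(E)$ is continuous --- the paper gets this from the Banach--Steinhaus theorem --- but that is minor.)

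The genuine gap is in your last step, the continuity of $g\mapsto\mu\twast_\theta g$ from $E$ to $E$. You invoke the closed graph argument ``recorded in Section~\ref{S3}'', but that remark is explicitly conditional: \emph{if} the closed graph theorem is applicable to $E$, the continuity requirement is automatic. The hypotheses of Theorems~\ref{T1} and \ref{T2} only make $E$ a complete nuclear barrelled (hence Montel, reflexive, Mackey) subspace of $S$; such a space need not be a Pt\'ak space, a Fr\'echet space, or a webbed space, so no version of the closed graph theorem is available in this generality --- the paper itself only claims its applicability for the FN and DFN examples of Section~\ref{S6}. Showing the graph of $g\mapsto\mu\twast_\theta g$ is closed therefore does not yield continuity. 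The repair is already implicit in your own construction: for each fixed $v\in E'$ the map $g\mapsto\langle v,L_g^t\mu\rangle=\langle\mu,g\twast_\theta v\rangle$ is continuous because $g\mapsto g\twast_\theta v$ is continuous from $E$ to $M(E)$ by \refT{1} (a fortiori weakly continuous); hence $g\mapsto L_g^t\mu=\mu\twast_\theta\check g$ is continuous from $E$ to $E_\sigma$, i.e.\ weakly continuous, and since a barrelled space is a Mackey space, a weakly continuous linear map between $E$ and itself is continuous in the original topology. This is exactly how the paper closes the argument, and with this substitution your proof is complete.
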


\begin{proof}
Let  $Q$ be a bounded set in $E$. The set of all continuous maps
$g\to f\cdot g$, where $f\in Q$, is pointwise bounded. By the
Banach-Steinhaus theorem, it follows that for each neighborhood
$U$ of the origin in  $E$, there exists a neighborhood $V$ such
that $f\cdot g\in U$ for all $g\in V$ and  $f\in Q$. This implies
that the natural injection  $E\to M(E)$ is continuous. Its
transpose $M'(E)\to E'$ is hence well defined and continuous. The
assumption that $E$ is dense  in $M(E)$ implies that the latter
map is injective. Therefore, if $w\in M'(E)$, then $w\in E'$ and
$w\twast g\in M(E)$ by Theorem~\ref{T1}. We must show that
$w\twast g\in E$. By Corollary~\ref{C1} we have
\begin{equation}
(w\twast g)\twast f=w\twast(g\twast f) \quad \text{for all $g,f\in
E$}.
 \notag
 \end{equation}
The equality of these convolution products  at zero can be written
as
\begin{equation}
\langle w\twast g,\check{f}\rangle=\langle\check{w},g\twast
f\rangle.
 \label{4.7}
 \end{equation}
We let $L'_g$ denote the linear map from $M'(E)$ to $E$ that is
transpose of the continuous map $L_g\colon v\to g\twast v$. We
then have
\begin{equation}
\langle v ,L'_g\check{w}\rangle=\langle\check{w},g\twast v\rangle
\quad \text{for all $v\in E'$.}
 \label{4.8}
 \end{equation}
For $v=f$, the right-hand side of~\eqref{4.7} coincides with that
of~\eqref{4.8}, and the left-hand side of \eqref{4.8} is written
as $\int (L'_g\check{w})(t)f(t)dt$. Because $f$ is an  arbitrary
element of $E$ and this space has sufficiently many functions, we
infer that the function $(w\twast g)(t)$ coincides with the
function $(L'_g\check{w})(-t)$ belonging to $E$. It remains to
show that the map $E\to E\colon g\to w\twast g$ is continuous. By
Theorem~\ref{T1}, for any fixed $v\in E'$, the map $E\to
M(E)\colon g\to g\twast v$ is continuous,  and hence {\it a
fortiori} continuous in the weak topologies of  $E$ and $M(E)$.
Therefore, the scalar-valued function  $g\to \langle\check{w},L_g
v\rangle$ is continuous in the topology  $\sigma(E,E')$, and  the
map $E\to E\colon g\to L'_g\check{w}=w\twast g$ is hence weakly
continuous. Since $E$ is a Mackey space, this map is also
continuous in its original topology. We conclude that $w\in
{\mathcal C}_L(E)$. Analogously, we obtain  $w\in {\mathcal
C}_R(E)$. The theorem is proved.
\end{proof}

Theorem~\ref{T2} implies a corresponding result for the space  $F$
that is Fourier-conjugate to $E$, i.e., $E=\widehat F$. We let
$\hat f(s)\coloneq\int f(x)e^{-i x\cdot s}dx$ denote the Fourier
transform of a function $f$. For any functions $f_1$ and $f_2$ in
the Schwartz space $S$, we have the relation
\begin{equation}
\widehat{(f_1\star_\theta f_2)}=(2\pi)^{-d}\,\hat
f_1\mathbin{\twast_\theta}\hat f_2,
\label{4.9}
 \end{equation}
which can be taken as the definition of the Moyal product of
elements of  $S$. Therefore, if  $E$ satisfies the conditions of
Theorems~\ref{T1} and \ref{T2}, then $F$ is an algebra under the
$\star_\theta$-product with separately continuous multiplication.
This product can be uniquely extended by continuity to the case
where one of the factors belongs to the dual space, and if  $u\in
F'$ and $f\in F$, then
\begin{equation}
\widehat{(u\star_\theta f)}=(2\pi)^{-d}\,\hat
u\mathbin{\twast_\theta}\hat f,\quad  \widehat{(f\star_\theta
u)}=(2\pi)^{-d}\,\hat f\mathbin{\twast_\theta}\hat u. \notag
 \end{equation}
Furthermore, we have the  relations
\begin{equation}
\langle u\star_\theta g,f\rangle= \langle u,  g\star_\theta
f\rangle,\,\, \langle g\star_\theta u,f\rangle= \langle u,
f\star_\theta g\rangle,\qquad f,g\in F,\quad u\in F',
 \label{4.10}
\end{equation}
which correspond to  formulas~\eqref{4.5} and extend the Moyal
product to elements of  $F'$ by duality. Because $E$ is an algebra
under  ordinary multiplication, $F$ is also an algebra under
ordinary convolution. We let $C(F)$ denote the corresponding
algebra of (ordinary) convolution multipliers and  $\mathcal
M_{\theta, L}(F)$ and $\mathcal M_{\theta, R}(F)$ denote  the
algebras of left and right Moyal multipliers for $F$. Then
$\widehat{C}(F)= M(\widehat F)$, $\widehat{\mathcal M}_{\theta,
L}(F)=\mathcal C_{\theta, L}(\widehat F)$, and $\widehat{\mathcal
M}_{\theta, R}(F)=\mathcal C_{\theta, R}(\widehat F)$. We thus
obtain the following result.

\begin{corollary}\label{C2}
If $F$ is a function space whose Fourier transform $E=\widehat F$
satisfies the conditions of Theorems~\ref{T1} and \ref{T2}, then
the space $C'(F)$ is contained in ${\mathcal
M}_\theta(F)={\mathcal M}_{\theta, L}(F)\cap{\mathcal M}_{\theta,
R}(F)$.
\end{corollary}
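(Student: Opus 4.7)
The plan is to reduce the corollary directly to \refT{2} by Fourier transform, using the identifications listed at the end of \refS{4}. Since $F$ is defined by $E=\widehat F$ and the Fourier transform is a topological isomorphism between $F$ and $E$, every algebraic and topological object attached to $F$ via multiplication or ordinary convolution corresponds, under Fourier transform, to the analogous object attached to $E$ via convolution or pointwise multiplication. In particular, the excerpt already records the identifications $\widehat{C}(F)=M(\widehat F)=M(E)$ and $\widehat{\mathcal M}_{\theta,L}(F)=\mathcal C_{\theta,L}(E)$, $\widehat{\mathcal M}_{\theta,R}(F)=\mathcal C_{\theta,R}(E)$.

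First, I would verify that these identifications are homeomorphisms when the multiplier spaces carry the topology induced by $\mathcal L(F)$ and $\mathcal L(E)$ respectively. This is immediate from the fact that $f\mapsto \widehat f$ intertwines pointwise multiplication on $E$ with ordinary convolution on $F$ (up to the standard $(2\pi)^{-d}$ factor) and, by~\eqref{4.9}, Moyal multiplication on $F$ with twisted convolution on $E$. Consequently, the transpose of the Fourier transform yields a topological isomorphism of the dual spaces $C'(F)\cong M'(E)$, and similarly an isomorphism $\mathcal M_\theta(F)\cong \mathcal C_\theta(E)=\mathcal C_{\theta,L}(E)\cap \mathcal C_{\theta,R}(E)$ (the latter intersection being endowed with the topology described in \refS{3}).

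Next, I apply \refT{2} to the space $E$, which by hypothesis satisfies the assumptions of Theorems~\ref{T1} and~\ref{T2}: this yields $M'(E)\subset \mathcal C_\theta(E)$. Transporting this inclusion back to $F$ through the isomorphisms of the preceding step produces exactly $C'(F)\subset \mathcal M_\theta(F)$, as desired.

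The only step that requires a little care is checking that the formulas~\eqref{4.10} defining the extension of the Moyal product to $F'$ by duality are indeed the images, under Fourier transform, of the duality formulas~\eqref{4.5} used to extend the twisted convolution; but this is a direct consequence of~\eqref{4.9} and the Parseval relation, so no essential obstacle arises. In short, the content of the corollary is purely the Fourier-dual restatement of \refT{2}, and the only substantive work — the construction of twisted convolution multipliers from elements of $M'(E)$ — has already been carried out there.
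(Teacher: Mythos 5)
Your proposal is correct and follows essentially the same route as the paper: the paper's justification of Corollary~2 is precisely the discussion preceding it, which records the identifications $\widehat{C}(F)=M(\widehat F)$, $\widehat{\mathcal M}_{\theta,L}(F)=\mathcal C_{\theta,L}(\widehat F)$, $\widehat{\mathcal M}_{\theta,R}(F)=\mathcal C_{\theta,R}(\widehat F)$ coming from relation~\eqref{4.9}, and then invokes Theorem~\ref{T2} for $E$. Your additional care about the topologies on the multiplier algebras and the compatibility of the duality formulas~\eqref{4.10} with~\eqref{4.5} only makes explicit what the paper leaves implicit.
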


\section{The case of  Fourier-invariant spaces }
\label{S5} The Fourier-invariant function  spaces with the
structure of an algebra under both the twisted convolution and the
Moyal products are  particularly interest. We recall that an
autohomeomorphism of a topological space is  a continuous
bijection  of this space onto itself, whose inverse is also
continuous.

\begin{theorem}\label{T3}  If a space $E$ satisfies the conditions
of Theorems~\ref{T1} and \ref{T2} and, in addition, the Fourier
transform and the linear changes of variables in $\oR^d$ are
autohomeomorphisms of $E$, then both the spaces  $M'(E)$ and
$C'(E)$ are contained in ${\mathcal C}_\theta(E)$ and in
${\mathcal M}_\theta(E)$.
\end{theorem}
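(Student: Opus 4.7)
Two of the four inclusions are direct consequences of preceding results: $M'(E)\subseteq\mathcal{C}_\theta(E)$ is \refT{2} applied to $E$, and $C'(E)\subseteq\mathcal{M}_\theta(E)$ is \refC{2} applied with $F=E$, which is legitimate because the hypothesis $\widehat{E}=E$ makes $E$ itself the Fourier image of a space to which \refT{1} and \refT{2} apply.

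The remaining inclusions $M'(E)\subseteq\mathcal{M}_\theta(E)$ and $C'(E)\subseteq\mathcal{C}_\theta(E)$ are Fourier-equivalent to one another. Since $\mathcal{F}$ is an autohomeomorphism of $E$ that extends continuously to $E'$, and since $\mathcal{F}(M(E))=C(E)$ (Fourier exchanges pointwise multiplication and convolution, hence $\mathcal{F}$ carries $M'(E)$ bijectively onto $C'(E)$), while $\mathcal{F}(\mathcal{C}_\theta(E))=\mathcal{M}_\theta(E)$ by the identity $\widehat{\mathcal{M}}_{\theta,L}(F)=\mathcal{C}_{\theta,L}(\widehat{F})$ recalled before \refC{2}, applying $\mathcal{F}$ transports one of the cross inclusions into the other. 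It therefore suffices to establish just one of them, say $C'(E)\subseteq\mathcal{C}_\theta(E)$.

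To prove this, I plan to refine the duality construction in the proof of \refT{2}. For fixed $g\in E$, \refT{1} gives a continuous linear map $L_g\colon v\mapsto g\twast_\theta v$ from $E'$ into $M(E)$. Under the additional hypotheses of \refT{3}, I would show that $L_g$ in fact takes values in the intersection $M(E)\cap C(E)$: Fourier-transforming the bilinear phase $e^{-i\theta/2}$ on $\mathbb{R}^{2d}$ produces, by an elementary Gaussian integral, another bilinear phase in the dual variables, and the linear-change-of-variables invariance of $E(\mathbb{R}^{2d})=E\mathbin{\widetilde{\otimes}}E$ shows that the pointwise-multiplier condition of \refT{1} is equivalent to the convolution-multiplier condition $e^{-i\theta/2}\in C(E\mathbin{\widetilde{\otimes}}E)$; the analogue of \refT{1}'s proof carried out with this strengthened condition then gives $g\twast_\theta v\in C(E)$ as well. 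The transpose $L'_g\colon(M(E)\cap C(E))'\to E$ is therefore defined on a space containing both $M'(E)$ and $C'(E)$, since $M(E)\cap C(E)$ sits continuously inside both $M(E)$ and $C(E)$. Running the same identification as in \refT{2}, namely $L'_g\check{w}=(w\twast_\theta g)^{\vee}$ --- which now holds on the larger domain by continuity together with the density of $E$ in $M(E)\cap C(E)$ inherited from its density in both factors --- yields $w\twast_\theta g\in E$ for any $w\in C'(E)$. The right-multiplier version is analogous, so $C'(E)\subseteq\mathcal{C}_\theta(E)$; the remaining inclusion $M'(E)\subseteq\mathcal{M}_\theta(E)$ then follows from the Fourier reduction of the previous paragraph.

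The main obstacle is the strengthening of \refT{1} to the conclusion $g\twast_\theta v\in M(E)\cap C(E)$ under the hypotheses of \refT{3}. This hinges on Fourier-analyzing the bilinear phase $e^{-i\theta/2}$ on $\mathbb{R}^{2d}$ --- whose Fourier transform is again, up to a linear change of variables, a bilinear phase (this time in $\theta^{-1}$) --- and using the linear-change-of-variables hypothesis to translate between pointwise- and convolution-multiplier formulations of the condition. Once this strengthening is in place, the continuity of the maps and the identification of $L'_g\check{w}$ with $(w\twast_\theta g)^{\vee}$ go through by essentially the same arguments as in \refT{2}, with the Mackey-topology step handled as before.
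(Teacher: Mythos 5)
Your overall architecture coincides with the paper's: the inclusions $M'(E)\subseteq\mathcal C_\theta(E)$ and $C'(E)\subseteq\mathcal M_\theta(E)$ do come for free from \refT{2} and Corollary~2 (the latter applied with $F=E$, using $\widehat E=E$), and the two remaining ``cross'' inclusions are indeed exchanged by the Fourier transform via $\widehat{M}(E)=C(\widehat E)$ and $\widehat{\mathcal M}_\theta(E)=\mathcal C_\theta(\widehat E)$, so only one of them needs a genuine argument. Your guiding intuition --- that everything hinges on the Fourier transform of the bilinear phase being again a bilinear phase, now built from $\vartheta^{-1}$ --- is also the correct one.

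The gap is precisely in the step you flag as ``the main obstacle.'' You assert that ``the analogue of \refT{1}'s proof carried out with the strengthened condition $e^{-\tfrac{i}{2}\theta}\in C(E\mathbin{\widetilde{\otimes}}E)$'' yields $g\twast_\theta v\in C(E)$, but no such analogue is available as a routine modification: the proof of \refT{1} rests on the kernel-theorem identification of $E(\oR^{2d})$ with separately continuous bilinear forms and on evaluation at the Dirac functionals $\delta_s,\delta_t$, machinery that is specific to \emph{pointwise} multiplication. To conclude $(v\twast_\theta g)*f\in E$ one would need the partial convolution $(s,t)\mapsto\int f(s-r)\,e^{\tfrac{i}{2}\theta(r,t)}g(r-t)\,dr$ to lie in $E(\oR^{2d})$, and this does not follow from $e^{-\tfrac{i}{2}\theta}$ being a convolution multiplier of $E(\oR^{2d})$ (a condition on convolution in all $2d$ variables). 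What actually closes this step in the paper is the explicit computation~\eqref{5.1}, giving the identity $f_1\star_\vartheta f_2=\tfrac{1}{\pi^d\det\vartheta}\,(\mathcal F_\vartheta f_1)\twast_{-4\vartheta^{-1}}f_2$ and its mirror with $\overline{\mathcal F}_\vartheta$ acting on the right factor, where $\mathcal F_\vartheta$ is a symplectic Fourier transform. This rewrites the Moyal product as a twisted convolution with the \emph{new} deformation parameter $-4\vartheta^{-1}$, whose phase $e^{-2ix\cdot\vartheta^{-1}\xi}$ lies in $M(E\mathbin{\widetilde{\otimes}}E)$ by the linear-change-of-variables hypothesis; \refT{1} as stated then applies and gives $f\star_\vartheta v,\,v\star_\vartheta f\in M(E)$, after which the \refT{2}-style transpose argument yields $M'(E)\subseteq\mathcal M_\theta(E)$, and the Fourier reduction you describe supplies the last inclusion. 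Your intermediate claim that $g\twast_\theta v\in M(E)\cap C(E)$ is in fact true, but proving it amounts to establishing this identity (or an equivalent rewriting), which your proposal does not supply; as written, the argument is incomplete at its central point.
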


\begin{proof}
We can assume that the symplectic form $\theta$ is determined by
an antisymmetric matrix  $\vartheta$ via the formula
$\theta(s,t)=s\cdot\vartheta t$, where  the dot denotes the usual
inner product on  $\oR^d$. Let $f_1,f_2\in E$. It follows
from~\eqref{1.1} and~\eqref{4.9} that
\begin{multline}
(f_1\star_\theta f_2)(x)= \frac{1}{(2\pi)^{2d}}\iiint \hat f_1(t)
f_2(y)\,e^{\tfrac{i}{2}s\cdot\vartheta t-i(s-t)\cdot y +is\cdot x}dt\,dy\,ds=\\
=\frac{1}{(2\pi)^{d}}\iint \hat f_1(t) f_2(y)\,e^{it\cdot
y}\delta\left(y-x-\tfrac12\vartheta t\right)  dt \,
dy=\frac{1}{(2\pi)^{d}}\int  \hat f_1(t) f_2(x+\tfrac12\theta
t)\,e^{it\cdot x} dt=
\\=\frac{1}{\pi^d\det\theta}\int  \hat f_1(-2\theta^{-1}\xi)
f_2(x-\xi)\,e^{-2ix\cdot\theta^{-1}\xi}d\xi.
 \label{5.1}
\end{multline}
Hence, we have
\begin{equation}
f_1\star_\vartheta f_2= \frac{1}{\pi^d\det\theta} (\mathcal
F_\vartheta f_1)\twast_{-4\vartheta^{-1}} f_2,
 \notag
\end{equation}
where
\begin{equation}
(\mathcal F_\vartheta f)(\xi)\coloneq \int
f(x)e^{2ix\cdot\vartheta^{-1}\xi}dx.\notag
\end{equation}
An analogous calculation gives
\begin{equation}
f_1\star_\vartheta f_2= \frac{1}{\pi^d\det\theta}
f_1\twast_{-4\vartheta^{-1}} (\overline{\mathcal F}_\vartheta
f_2), \notag
\end{equation}
where
\begin{equation}
(\overline{\mathcal F}_\vartheta f)(\xi)\coloneq \int
f(x)e^{-2ix\cdot\vartheta^{-1}\xi}dx.\notag
\end{equation}
It follows from the  conditions of the theorem that each of the
two symplectic Fourier transforms  $\mathcal F_\vartheta$ and
$\overline{\mathcal F_\vartheta}$ is an  autohomeomorphism of $E$,
and, furthermore, $e^{-2i x\cdot\vartheta^{-1}\xi}\in
M(E\mathbin{\widetilde{\otimes}}E)$. Therefore, by
Theorem~\ref{T1}, the twisted convolution products $(\mathcal
F_\vartheta f)\twast_{-4\vartheta^{-1}} v$ and
$v\twast_{-4\vartheta^{-1}}(\overline{\mathcal F}_\vartheta f)$
with the deformation parameter $-4\vartheta^{-1}$ are well defined
for all $f\in E$ and for all $v\in E'$, and moreover, are
continuous in $v$. Because the twisted convolution and the  Moyal
multiplication extend   by continuity uniquely to the case where
one of the factors belongs to the dual space, we conclude  that
\begin{equation}
f\star_\vartheta v= \frac{1}{\pi^d\det\theta}(\mathcal F_\theta
f)\twast_{-4\vartheta^{-1}} v,\quad v\star_\vartheta
f=\frac{1}{\pi^d\det\theta} v\twast_{-4\vartheta^{-1}}
(\overline{\mathcal F}_\theta f).
 \label{5.2}
\end{equation}
By Theorem~\ref{T1} and formulas~\eqref{5.2}, both the products
$f\star_\vartheta v$ and $v\star_\vartheta f$ belong to $M(E)$,
and the maps $(f,v)\to f\star_\vartheta v$ and $(f,v)\to
v\star_\vartheta f$ from $E\times E'$ to  $M(E)$ are continuous.
Further arguments are similar to those  in the proof of
Theorem~\ref{T2}. Let $w\in M'(E)\subset E'$. Then
$w\star_\vartheta f\in M(E)$ by Theorem~\ref{T1}. We need to show
that $w\star_\vartheta f\in E$. From~\eqref{4.10}, we have
\begin{equation}
\langle w\star_\vartheta f,g\rangle=\langle w,f\star_\vartheta
g\rangle \quad \text{for all $g\in E$}.
 \label{5.3}
 \end{equation}
Let $L_f$ be the linear map $v\to f\star_\theta v$ from $E'$ to
$M(E)$. Then
\begin{equation}
\langle  v, L'_f w\rangle=\langle w,f\star_\vartheta v\rangle
\quad \text{for all $v\in E'$}.
 \label{5.4}
 \end{equation}
For $v=g$, the right-hand sides of equalities~\eqref{5.3} and
\eqref{5.4} coincide and
 \begin{equation}
 \langle g, L'_f w\rangle=\int  (L'_f
w)(\xi)g(\xi)d\xi.
 \notag
 \end{equation}
Therefore,  $w\star_\vartheta f$ coincides with the function $L'_f
w$ belonging to $E$. For any fixed $v\in E'$, the map $E\to
M(E)\colon f\to f\star_\vartheta v$ is continuous, and {\it a
fortiori} continuous in the weak topologies of $E$ and $M(E)$. The
function $f\to \langle w,L_f v\rangle$ is hence continuous in the
topology $\sigma(E,E')$. Consequently, the map $E\to E\colon f\to
L'_f w=w\star_\vartheta f$ is weakly continuous. Because $E$ is a
Mackey space, this map is also continuous in its original
topology. We conclude that $w\in {\mathcal M}_{\theta,L}(E)$.
Analogously, $w\in {\mathcal M}_{\theta,R}(E)$, and hence
$M'(E)\subset {\mathcal M}_\theta(E)$. From this result combined
with the isomorphisms $\widehat{M}(E)=C(\widehat E)$ and
$\widehat{{\mathcal M}}_\theta(E)={\mathcal C}_\theta(\widehat
E)$, where in this case $\widehat E =E$, we deduce that
$C'(E)\subset\mathcal C_\theta(E)$. The theorem is proved.
\end{proof}

\begin{remark}\label{R3} It follows from formulas~\eqref{5.2} that under
the conditions of Theorem~\ref{T3}, the algebras ${\mathcal
M}_\vartheta$ and ${\mathcal
C}_{-4\vartheta^{-1}}(E)=\widehat{{\mathcal
M}}_{-4\vartheta^{-1}}(E)$ consist of the same elements of $E'$.
In particular, for $\theta=\hbar J$ with $J$ being the standard
symplectic matrix, the algebras ${\mathcal M}_{\hbar J}$ and
${\mathcal C}_{4\hbar^{-1}J}(E)$ consist of the same elements, and
the algebra ${\mathcal M}_{2J}(E)$ is Fourier invariant.
\end{remark}

\section{Examples and concluding remarks}
\label{S6}
 Many of the spaces used in the theory of generalized
functions are nuclear Fr\'echet spaces or their strong duals. They
are  respectively abbreviated as  FN  and DFN spaces. These spaces
have additional nice properties. In particular, the Ptak version
of the closed graph theorem is applicable to them and the separate
continuity of a bilinear map of the Cartesian product of such
spaces to an arbitrary locally convex space is equivalent to the
joint continuity of this map. Furthermore, the completed
projective tensor product of two FN or DFN spaces is respectively
an FN space or a DFN space (see.~\cite{Sch}, \cite{K}). We note
that every FN space is also an FS  (Fr\'echet-Schwartz) space and
every DFN space is a DFS space.  A survey of basic properties of
FS and DFS spaces can be found, for example, in~\cite{Zh},
\cite{Mor}. The formulations and proofs given above can be
considerably simplified for these spaces. The Schwartz space $S$
is the best-known example of an FN space. The spaces
$\mathscr{S}^\beta$ considered in~\cite{S07}, \cite{S10} in the
context of noncommutative field theory also belong to this class.
The space $\mathscr{S}^\beta(\oR^d)$, $\beta>0$, is defined as the
projective limit
$\projlim\limits_{N\to\infty,B\to0}S^{\beta,B}_N(\oR^d)$, where
$S_N^{\beta,B}(\oR^d)$ is the Banach space of  smooth functions on
$\oR^d$ with the finite norm
\begin{equation}
\|f \|_{N,B}=\sup_{x,\kappa}\,(1+|x|)^N\frac{|\partial^\kappa
f(x)|}{B^{|\kappa|}\kappa^{\beta\kappa}}.
 \label{6.1}
\end{equation}
(Here and hereafter, we use the standard   multi-index notation
adopted in~\cite{GS2}, \cite{H1}.) The Fourier transformed space
$\widehat{\mathscr{S}}^\beta=\mathscr{S}_\beta$ belongs to the
class of spaces $K(M_n)$, where in this case
$M_n(p)=e^{n|p|^{1/\beta}}$, and satisfies the nuclearity
condition $\int (M_n/M_{n'})dp<\infty$, $n'>n$, indicated
in~\cite{GS2}. If $\beta>1$, then the space $\mathscr{S}^\beta$
contains functions of compact support, and the elements of its
dual $(\mathscr{S}^\beta)'$ are said to be tempered
ultradistributions (of the Beurling type). For $\beta\le 1$, the
functions in $\mathscr{S}^\beta(\oR^d)$ allow analytic
continuation  to $\oC^d$ as entire functions of order $\le
1/(1-\beta)$. The space $\mathscr{S}^1$ plays a particular role.
Following Morimoto~\cite{Mor2}, we call the elements of
$(\mathscr{S}^1)'$  tempered ultrahyperfunctions; they were used
in axiomatic formulation of nonlocal quantum field theory with a
fundamental length (see~\cite{S10JMP}, \cite{BN}, \cite{S09JMP}
and references therein). It is easy to see that
$e^{-\tfrac{i}{2}\theta}$ is a pointwise multiplier of
$\mathscr{S}_\beta(\oR^{2d})$  for any real bilinear form $\theta$
on $\oR^d$. The space $\mathscr{S}_\beta$ thus satisfies all the
conditions of Theorem~\ref{T1} and is hence an algebra under the
star product $\star_\theta$ for any $\beta>0$. In the study of
noncommutative quantum field theory models, it is common practice
to represent the Moyal product as the power series
\begin{equation}
(f\star_\theta g)(x)
=f(x)g(x)+\sum_{n=1}^\infty\left(\frac{i}{2}\right)^n\frac{1}{n!}\,
\vartheta^{\mu_1\nu_1}\dots
\vartheta^{\mu_n\nu_n}\partial_{\mu_1}\dots
\partial_{\mu_n}f(x)\partial_{\nu_1}\dots\partial_{\nu_n}g(x),
 \label{6.2}
\end{equation}
where $\vartheta^{\mu\nu}$ is the matrix of the symplectic form
$\theta$. It was shown in~\cite{S07} that if $\beta\le1/2$, then
series~\eqref{6.2} converges absolutely for  all
$f,g\in\mathscr{S}^\beta$ in each of  norms~\eqref{6.1}. The space
$\mathscr{S}^{1/2}$ is  a maximal FN subspace of $S$ for which
representation~\eqref{6.2} is absolutely convergent, and this
space hence  plays a  special role in noncommutative field theory.
Let $E^{\beta,B}_N(\oR^d)$ be the Banach space of smooth functions
on $\oR^d$ with the finite norm
\begin{equation}
\|f \|_{-N,B}=\sup_{x,\kappa}\,(1+|x|)^{-N}\frac{|\partial^\kappa
f(x)|}{B^{|\kappa|}\kappa^{\beta\kappa}}
 \label{6.3}
\end{equation}
and let $\mathscr{E}^\beta=\injlim\limits_{N\to\infty}
\projlim\limits_{B\to0}E^{\beta,B}_N$.

\begin{theorem}\label{T4}
The space $\mathscr{E}^\beta$ is contained in $\mathcal
M_\theta(\mathscr{S}^\beta)$. For each $w\in \mathscr{E}^\beta$
and  each $v\in (\mathscr{S}^\beta)'$, the Moyal products
$w\star_\theta v$ and $v\star_\theta w$ are well defined as
elements of $(\mathscr{S}^\beta)'$.
\end{theorem}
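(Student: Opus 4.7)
The plan is to deduce both assertions from Corollary~\ref{C2} applied with $F=\mathscr{S}^\beta$, which would yield the chain of inclusions $\mathscr{E}^\beta\subset C'(\mathscr{S}^\beta)\subset\mathcal{M}_\theta(\mathscr{S}^\beta)$. For the second inclusion one must first verify that $E=\widehat F=\mathscr{S}_\beta$ satisfies the hypotheses of Theorems~\ref{T1} and~\ref{T2}. The condition $e^{-i\theta/2}\in M(\mathscr{S}_\beta\mathbin{\widetilde{\otimes}}\mathscr{S}_\beta)$ has already been noted above, and the continuity of the involutive change of variables $h(s,t)\mapsto h(s,s-t)$ on $\mathscr{S}_\beta(\oR^{2d})$ is immediate from the seminorms that define $\mathscr{S}_\beta$. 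The extra hypothesis of Theorem~\ref{T2}---density of $\mathscr{S}_\beta$ in $M(\mathscr{S}_\beta)$---I would obtain by a standard truncation-plus-mollification argument using cutoffs drawn from $\mathscr{S}_\beta$ itself.

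The essential step is therefore the inclusion $\mathscr{E}^\beta\subset C'(\mathscr{S}^\beta)$. Since ordinary convolution on $\mathscr{S}^\beta$ corresponds under Fourier transform to pointwise multiplication on $\mathscr{S}_\beta$, there are canonical identifications $C(\mathscr{S}^\beta)\cong M(\mathscr{S}_\beta)$ and $C'(\mathscr{S}^\beta)\cong M'(\mathscr{S}_\beta)$. The task reduces to showing that for every $w\in\mathscr{E}^\beta$ the Fourier transform $\widehat w$ defines a continuous linear functional on $M(\mathscr{S}_\beta)$. I would carry this out by estimating $\widehat w$ directly from the seminorms~\eqref{6.3}: the polynomial-in-$x$ growth controlled by $N$ combined with the Gevrey derivative bound $B^{|\kappa|}\kappa^{\beta\kappa}$ translates, via integration by parts, into Paley--Wiener type decay estimates on $\widehat w$ that pair continuously with the Beurling-type growth $e^{\varepsilon|p|^{1/\beta}}$ allowed for elements of $M(\mathscr{S}_\beta)$. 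This verification is the main technical obstacle, amounting to the precise Fourier duality between $\mathscr{E}^\beta$ and the multiplier algebra of $\mathscr{S}_\beta$, analogous to the classical identification $\widehat{\mathcal{O}}_M=\mathcal{O}'_C$ in the tempered setting.

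The second assertion of the theorem follows immediately from the first. Once $w\in\mathscr{E}^\beta$ is known to lie in $\mathcal{M}_\theta(\mathscr{S}^\beta)=\mathcal{M}_{\theta,L}(\mathscr{S}^\beta)\cap\mathcal{M}_{\theta,R}(\mathscr{S}^\beta)$, the maps $f\mapsto w\star_\theta f$ and $f\mapsto f\star_\theta w$ are continuous endomorphisms of $\mathscr{S}^\beta$; their transposes are then continuous operators on $(\mathscr{S}^\beta)'$, which is precisely the duality extension of the Moyal product realised by formulas~\eqref{4.10}. Hence $w\star_\theta v$ and $v\star_\theta w$ are well defined as elements of $(\mathscr{S}^\beta)'$ for every $v\in(\mathscr{S}^\beta)'$.
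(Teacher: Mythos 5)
Your overall skeleton agrees with the paper's: both reduce the theorem, via Corollary~\ref{C2}, to the single inclusion $\mathscr{E}^\beta\subset C'(\mathscr{S}^\beta)$, and both obtain the second assertion by the duality formulas~\eqref{4.10} once $w\in\mathcal M_\theta(\mathscr{S}^\beta)$ is known. The gap lies in the essential step, which you explicitly defer as ``the main technical obstacle'' and for which your sketched plan does not target the right object. The space $C'(\mathscr{S}^\beta)$ is the dual of $C(\mathscr{S}^\beta)$ equipped with the topology induced from $\mathcal L(\mathscr{S}^\beta)$, i.e., uniform convergence of the convolution operators $g\mapsto u*g$ on bounded subsets of $\mathscr{S}^\beta$. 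Your plan --- Paley--Wiener decay of $\widehat w$ paired against the pointwise growth $e^{\varepsilon|p|^{1/\beta}}$ permitted for elements of $M(\mathscr{S}_\beta)$ --- would at best show that a pairing integral converges absolutely for every multiplier; absolute convergence of a pairing is not continuity with respect to the operator topology, and that continuity is exactly what has to be proved. Palamodov's theorem already supplies the set-theoretic identification of $C(\mathscr{S}^\beta)$ with $(\mathscr{E}^\beta)'$; the entire remaining content is the comparison of topologies, namely that the multiplier topology of $C(\mathscr{S}^\beta)$ is finer than $\sigma((\mathscr{E}^\beta)',\mathscr{E}^\beta)$, whence $C'(\mathscr{S}^\beta)\supset((\mathscr{E}^\beta)'_\sigma)'=\mathscr{E}^\beta$. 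To convert your weighted-sup estimate into such a statement you would still need to show that the seminorm $m\mapsto\sup_p|m(p)|e^{-\varepsilon|p|^{1/\beta}}$ is continuous for the operator topology of $M(\mathscr{S}_\beta)$, which is essentially the missing work.

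The paper closes this gap by a concrete device that your sketch has no counterpart for: fixing $f_0\in\mathscr{S}^\beta$ with $\int f_0=1$, it writes the pairing as $\langle u,h\rangle=\int (u*h_\xi)(\xi)\,d\xi$ with $h_\xi(x)=h(\xi-x)f_0(x)$ (formula~\eqref{6.4}), proves via the Leibniz estimates~\eqref{6.5}--\eqref{6.6} that the weighted family $(1+|\xi|)^{-N_0}h_\xi$ is bounded in $\mathscr{S}^\beta$, and then uses a neighborhood--bounded-set pair in the operator topology to obtain the uniform bound~\eqref{6.7} and the convergence~\eqref{6.9} for nets $u_\gamma\to0$ in $C(\mathscr{S}^\beta)$. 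It must also verify that the extension restricted to $\mathscr{S}^\beta$ coincides with the original functional, which uses the Montel property of $\mathscr{S}^\beta$; this consistency check is absent from your plan as well. Your auxiliary verifications (the multiplier property of $e^{-\tfrac{i}{2}\theta}$, continuity of $h(s,t)\mapsto h(s,s-t)$, density of $E$ in $M(E)$) are reasonable and consistent with what the paper assumes or states in Sec.~6 and Remark~\ref{R4}, but as it stands the proof is incomplete at its central point.
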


\begin{proof}
By Corollary~\ref{C2}, to  prove the first statement, it suffices
to show the inclusion $\mathscr{E}^\beta\subset
C'(\mathscr{S}^\beta)$. Then the second statement  also holds
because the products under consideration are defined by duality
for any $w\in \mathcal M_\theta(\mathscr{S}^\beta)$ and $v\in
(\mathscr{S}^\beta)'$. As proved in~\cite{P}, the algebra
$C(\mathscr{S}^\beta)$ consists of the same elements as the
space\footnote{In~\cite{P}, the notation $S^{\beta-}$ and
$\hat{\mathscr{E}}^{\beta-}$ was used instead of
$\mathscr{S}^\beta$ and $\mathscr{E}^\beta$. The space of
pointwise multipliers for $\mathscr{S}^\beta$ is shown there to be
$\projlim\limits_{B\to0}\injlim\limits_{N\to\infty}
E^{\beta,B}_N$.}  $(\mathscr{E}^\beta)'$. We want to show that the
topology of $C(\mathscr{S}^\beta)$ is not weaker than
$\sigma((\mathscr{E}^\beta)', \mathscr{E}^\beta)$ and hence
$C'(\mathscr{S}^\beta)\supset
((\mathscr{E}^\beta)'_\sigma)'=\mathscr{E}^\beta$.

If a functional $u\in (\mathscr{S}^\beta)'$ belongs to
$C(\mathscr{S}^\beta)$, then its continuous extension to
$\mathscr{E}^\beta$ can be constructed as follows. Let $f_0$ be a
function in $\mathscr{S}^\beta$ with the property that $\int
f_0(\xi)d\xi=1$, and let $h\in \mathscr{E}^\beta$. We set
\begin{equation}
\langle  u,h\rangle=\int \langle u,h(\cdot) f_0(\xi-\cdot)\rangle
d\xi=\int(u*h_\xi)(\xi) d\xi,\quad \text{where $h_\xi(x)\defeq
h(\xi-x)f_0(x)$}.
 \label{6.4}
\end{equation}
(the asterisk $*$ here  denotes ordinary convolution). The
integrand in~\eqref{6.4} is a continuous function  because
translations act continuously on $\mathscr{S}^\beta$ and $h$ is a
pointwise multiplier of this space. There exists an $N_0$ such
that $\|h\|_{-N_0,B}<\infty$ for each $B>0$, and for the function
$h_\xi\in \mathscr{S}^\beta$, we  hence have the estimate
\begin{multline}
|\partial^\kappa h_\xi(x)|\le \sum_\mu\binom{\kappa}\mu
|\partial^\mu h(\xi-x)\partial^{\kappa-\mu} f_0(x)|\\\le
\|h\|_{-N_0,B}\|f_0\|_{N,B}\sum_\mu\binom{\kappa}\mu
B^{|\mu|}\mu^{\beta\mu}
B^{|\kappa-\mu|}(\kappa-\mu)^{\beta(\kappa-\mu)}
(1+|\xi-x|)^{N_0}(1+|x|)^{-N}\\\le \|h\|_{-N_0,B}\|f_0\|_{N,B}
(2B)^{|\kappa|}\kappa^{\beta\kappa}(1+|x|)^{N_0-N}(1+|\xi|)^{N_0}.
 \label{6.5}
\end{multline}
Therefore, we find that for any $N,B>0$,
\begin{equation}
\|h_\xi\|_{N,B}\le
\|h\|_{-N_0,B/2}\|f_0\|_{N+N_0,B/2}(1+|\xi|)^{N_0}. \label{6.6}
\end{equation}
Because $u\in C(\mathscr{S}^\beta)$, for the neighborhood
$U=\{g\colon\|g\|_{N_0+d+1,1}<1\}$ in  $\mathscr{S}^\beta$, there
exists a neighborhood $V=\{g\colon\|g\|_{N,B}\le\delta\}$ such
that $u*g\in U$ for all $g\in V$. We let $\lambda_\xi$ denote the
right-hand side of inequality~\eqref{6.6}. Then
$\delta\lambda_\xi^{-1}h_\xi\in V$, and consequently
\begin{equation}
|(u* h_\xi)(\xi)|\le \delta^{-1}\lambda_\xi(1+|\xi|)^{-N_0-d-1}=
\delta^{-1}\|h\|_{-N_0,B/2}\|f_0\|_{N+N_0,B/2}(1+|\xi|)^{-d-1}.
 \label{6.7}
\end{equation}
Thus, the integral in~\eqref{6.4} converges absolutely and the
functional $u$ is well defined on $\mathscr{E}^\beta$. Since the
right-hand side of~\eqref{6.7} contains the factor
$\|h\|_{-N_0,B/2}$, this functional is continuous on every space
$\projlim\limits_{B\to0}\mathscr{E}^{\beta,B}_N$, and  is hence
continuous on $\mathscr{E}^\beta$. We must also show that the
functional defined by~\eqref{6.4}  when restricted to
$\mathscr{S}^\beta$ coincides with the original one, which
justifies using the same symbol  for them. If $h\in
\mathscr{S}^\beta$, then an estimate similar to~\eqref{6.5} yields
the inequality
\begin{equation}
\|h(x)f_0(\xi-x)\|_{N,B}\le
\|h\|_{N+N_0,B/2}\|f_0\|_{N_0,B/2}(1+|\xi|)^{-N_0}, \label{6.8}
\end{equation}
which holds for any $N_0$,$N$, and $B$. It follows that the
integral in~\eqref{6.4} remains absolutely convergent when $u$ is
replaced with any functional $v\in(\mathscr{S}^\beta)'$, because
$\|v\|_{N,B}\le \infty$ for some $N$ and $B$. The corresponding
sequence of integral sums is hence weakly fundamental in
$\mathscr{S}^\beta$. But $\mathscr{S}^\beta$, being a Montel
space, is complete with respect to  the weak convergence and this
convergence implies the strong convergence (see Sec.~I.6.3
in~\cite{GS2}). Therefore, the sequence of integral sums converges
in $\mathscr{S}^\beta$, and its limit  inevitably equals $h$,
because the topology of $\mathscr{S}^\beta$ is stronger than the
topology of simple convergence.

Now let $u_\gamma\to 0$ in $C(\mathscr{S}^\beta)$. Then for the
bounded set
\begin{equation}
Q=\bigcap\limits_{N,B}\{g\colon \|g\|_{N,B}\le
\|h\|_{-N_0,B/2}\|f_0\|_{N+N_0,B/2}\}
\notag
\end{equation}
and for the neighborhood
$\epsilon\, U=\{g\colon\|g\|_{N_0+d+1,1}<\epsilon\}$ with
arbitrarily small $\epsilon$, we can find $\gamma_0$ such that
$u_\gamma*g\in \epsilon\, U$ for all $g\in Q$ and all
$\gamma>\gamma_0$. By~\eqref{6.6} the family of functions
$(1+|\xi|)^{-N_0}h_\xi$ is  in $Q$ and we conclude that
\begin{equation}
|\langle u_\gamma,h\rangle|\le\int|(u_\gamma*h_\xi)(\xi)|d\xi\le
\epsilon\int(1+|\xi|)^{-d-1}d\xi\quad \text{for all
$\gamma>\gamma_0$.}
 \label{6.9}
\end{equation}
Therefore, $\langle u_\gamma,h\rangle\to 0$ for any $h\in
\mathscr{E}^\beta$, which completes the proof.
\end{proof}

The Gel'fand-Shilov spaces $S^\beta_\alpha$ are DFN spaces, and
the algebras $\mathcal M_\theta(S^\beta_\alpha)$ were constructed
and studied in~\cite{S11}. A special role of the minimal
nontrivial Fourier-invariant space $S^{1/2}_{1/2}$ was also shown
there. The other often used Gel'fand-Shilov spaces
$S^\beta=\injlim\limits_{B\to\infty}
\projlim\limits_{N\to\infty}S^{\beta,B}_N$ are neither Fr\'echet
nor DF spaces. This scale of spaces contains  $S^0$, which is the
Fourier transform of the space   $\mathscr D$ of smooth functions
with compact support. The possibility of using $S^0$ in
noncommutative field theory was discussed in~\cite{S06},
\cite{FP}. We note that the spaces $S^\beta$ with $\beta>0$ were
not given a topology in~\cite{GS2}, where the notion of
convergence of sequences  was used instead.  The above
consideration nevertheless applies to them because  these spaces
when equipped with a natural topology are nuclear~\cite{M},
complete~\cite{S01} and are obviously barrelled, being inductive
limits of Fr\'echet spaces. As shown in~\cite{P}, the algebra
$C(S^\beta)$ consists of the same elements as the dual of
$E^\beta=\injlim\limits_{N,B\to\infty}E^{\beta,B}_N=M(S^\beta)$.

\begin{theorem}\label{T5}
The space $E^\beta$ is contained in $\mathcal M_\theta(S^\beta)$.
For each $w\in E^\beta$ and  each $v\in (S^\beta)'$, the Moyal
products $w\star_\theta v$ and $v\star_\theta w$ are well defined
as elements of  $(S^\beta)'$.
\end{theorem}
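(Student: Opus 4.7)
The plan is to mirror the proof of Theorem~\ref{T4}, making the adjustments needed for the inductive-limit structure of $S^\beta$ and $E^\beta$. By Corollary~\ref{C2}, the first assertion reduces to showing the inclusion $E^\beta\subset C'(S^\beta)$, and the second statement then follows from the duality definition of $\star_\theta$ on $(S^\beta)'$, exactly as in Theorem~\ref{T4}. Since Palamodov's result, quoted just before the statement, identifies $C(S^\beta)$ with $(E^\beta)'$ as a vector space, the essential task is to prove that the natural topology of $C(S^\beta)$ is not weaker than $\sigma((E^\beta)',E^\beta)$; this yields $C'(S^\beta)\supset((E^\beta)'_\sigma)'=E^\beta$ by reflexivity of the Mackey pairing.

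The core of the argument is to extend any $u\in C(S^\beta)$, originally a functional on $S^\beta$, to a continuous linear functional on $E^\beta$ by the averaging formula
\begin{equation*}
\langle u,h\rangle\defeq\int (u*h_\xi)(\xi)\,d\xi,\qquad h_\xi(x)\defeq h(\xi-x)f_0(x),
\end{equation*}
with a fixed $f_0\in S^\beta$ satisfying $\int f_0\,d\xi=1$. Every $h\in E^\beta=\injlim_{N,B\to\infty}E^{\beta,B}_N$ has some finite norm $\|h\|_{-N_0,B_0}$, and $f_0\in S^\beta$ has $\|f_0\|_{N,B_1}<\infty$ for some $B_1$ and all $N$. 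The same Leibniz estimate as in~\eqref{6.5}, with $B_0,B_1$ in the roles of $B$, gives
\begin{equation*}
\|h_\xi\|_{N,2\max(B_0,B_1)}\le \|h\|_{-N_0,B_0}\|f_0\|_{N+N_0,B_1}(1+|\xi|)^{N_0},
\end{equation*}
so $h_\xi$ lies in a single Fr\'echet step of $S^\beta$ whose index is independent of $\xi$. Continuity of $u$ on that step then yields a bound of the form $|(u*h_\xi)(\xi)|\le C\|h\|_{-N_0,B_0}(1+|\xi|)^{-d-1}$, paralleling~\eqref{6.7}, which both guarantees absolute convergence of the defining integral and, by the universal property of the inductive limit, proves continuity of $h\mapsto\langle u,h\rangle$ on $E^\beta$. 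Agreement with the original $u$ on the dense subspace $S^\beta$ follows by repeating the Montel-completeness argument of Theorem~\ref{T4} using an estimate of type~\eqref{6.8}.

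Finally, I would check that $u\mapsto\langle u,h\rangle$ is continuous on $C(S^\beta)$ for each fixed $h\in E^\beta$: the family $\{(1+|\xi|)^{-N_0}h_\xi:\xi\in\oR^d\}$ is bounded in $S^\beta$ by the estimate above, so if $u_\gamma\to 0$ in $C(S^\beta)$ then $(u_\gamma*h_\xi)(\xi)\to 0$ uniformly in $\xi$ with an integrable majorant, and dominated convergence gives $\langle u_\gamma,h\rangle\to 0$, exactly as in~\eqref{6.9}. The only real obstacle is topological bookkeeping: in $S^\beta$ and $E^\beta$ the parameter $B$ tends to infinity rather than to zero, and these are (LF)-type spaces rather than Fr\'echet spaces, so one must keep track of the inductive step containing $h$ and $f_0$ and verify that the enlarged constants $2\max(B_0,B_1)$ still belong to the admissible range. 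With this adjustment the proof of Theorem~\ref{T4} transfers essentially verbatim, and no new analytic input is required.
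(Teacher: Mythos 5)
Your proposal follows essentially the same route as the paper's proof: the reduction via Corollary~\ref{C2} to showing that the topology of $C(S^\beta)$ is not weaker than $\sigma((E^\beta)',E^\beta)$, the same averaging extension~\eqref{6.4} with $f_0\in S^\beta$, the same Leibniz estimate placing the family $(1+|\xi|)^{-N_0}h_\xi$ in a bounded subset of a single inductive step, and the same dominated-convergence argument for continuity in $u$ on the bounded set $Q$. The one point the paper treats more carefully than your phrase ``continuity of $u$ on that step'' is that $u*h_\xi$ need not land in the same step, so the pointwise bound $|(u*h_\xi)(\xi)|\le C(1+|\xi|)^{-d-1}$ is obtained by taking the neighborhood $U$ of the origin in $S^\beta$ to be the absolutely convex hull of $\bigcup_B\{g\colon\|g\|_{N_0+d+1,B}<1\}$, whose elements are all dominated by $(1+|x|)^{-N_0-d-1}$, and using that $U$ absorbs the image of the bounded family under $g\to u*g$.
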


\begin{proof}
Similarly to the above case of  $\mathscr{S}^\beta$, it suffices
to show that the topology of $C(S^\beta)$ is not weaker than
$\sigma((E^\beta)', E^\beta)$. A continuous extension of
$u\in(S^\beta)'$ to $E^\beta$ can be defined by the same
formula~\eqref{6.4}, but with $f_0\in S^\beta$. This time
$\|h\|_{-N_0,B_0}<\infty$ for some  $N_0,B_0>0$ and an estimate
analogous to~\eqref{6.5} gives
\begin{equation}
\|h_\xi\|_{N,B_0}\le
\|h\|_{-N_0,B_0/2}\|f_0\|_{N+N_0,B_0/2}(1+|\xi|)^{N_0}\quad
\text{for any $N>0$}. \label{6.10}
\end{equation}
Therefore, the family of functions $(1+|\xi|)^{-N_0}h_\xi$ is
bounded in $S^\beta$. Let $U$  be the neighborhood of the origin
in $S^\beta$ defined as the absolute convex hull of the set
\begin{equation}
\bigcup\limits_B\{g\colon \|g\|_{N_0+d+1,B}<1\}.
 \notag
\end{equation}
All functions in $U$ are then dominated by $(1+|x|)^{-N_0-d-1}$.
Because the image of any bounded set under the continuous map
$g\to u*g$ is absorbed by the neighborhood $U$, we see that the
integral in~\eqref{6.4} converges and determines a continuous
linear functional on $E^\beta$. If $h\in S^\beta$, then in this
case,~\eqref{6.8} holds for some $B$ and for any $N$ and $N_0$.
Hence, the integral in~\eqref{6.4} remains absolutely convergent
when $u$ is replaced with any linear functional defined and
continuous on the Montel space
$\projlim\limits_{N\to\infty,\varepsilon\to0}S^{\beta,B_1+\varepsilon}_N$,
where $B_1>B$, and we deduce that the constructed functional
restricted to $S^\beta$ coincides with the initial one. Finally,
let $u_\gamma\to 0$ in $C(S^\beta)$. Then for the bounded set
\begin{equation}
Q=\bigcap\limits_{N}\{g\colon \|g\|_{N,B_0}\le
\|h\|_{-N_0,B_0/2}\|f_0\|_{N+N_0,B_0/2}\}
 \notag
\end{equation}
and  the neighborhood $\epsilon\, U$, there exists  $\gamma_0$
such that $u_\gamma*g\in \epsilon\,U$ for all $g\in Q$ and  all
$\gamma>\gamma_0$. By~\eqref{6.10}, the family
$(1+|\xi|)^{-N_0}h_\xi$ is contained in $Q$, and we arrive
at~\eqref{6.9}, which completes the proof.
\end{proof}

We note that for any $\theta$, the algebras $\mathcal
M_\theta(\mathscr{S}^\beta)$ and $\mathcal M_\theta(S^\beta)$
contain  all polynomials, as does the Moyal multiplier algebra of
the Schwartz space. Furthermore, for any $\beta\ge\beta_0$, where
$0<\beta_0<1$, these algebras contain the entire  functions of
order $\le 1/(1-\beta_0)$ (and minimal type in the case of
$\mathcal M_\theta(\mathscr{S}^{\beta_0})$), that are polynomially
bounded on the real space. In particular, for any $\beta\ge0$, the
algebra $\mathcal M_\theta(S^\beta)$ contains the space
$\mathscr{O}_{\rm exp}$  consisting of  entire functions of
exponential type satisfying the condition $|f(z)|\le
C(1+|z|)^Ne^{b|\Im z|}$, where the constants $C$, $N$ and $b$
depend of $f$. This generalizes an earlier result~\cite{G-BV},
\cite{E} for $\mathcal M_\theta(S)$.

\begin{remark}\label{R4} Using reasoning similar to that for
$S^\beta_\alpha$ in the proof of Lemma~1 in~\cite{S11}, we can
show that $\mathscr{S}^\beta$ is dense in $\mathcal
M_\theta(\mathscr{S}^\beta)$. In accordance with what was said at
the end of~\refS{3}, it follows that $\mathcal
M_\theta(\mathscr{S}^\beta)$ is an algebra under the Moyal product
and $(\mathscr{S}^\beta)'$ is an $\mathcal
M_\theta(\mathscr{S}^\beta)$-bimodule. Analogously, $(S^\beta)'$
is a bimodule over the algebra $\mathcal M_\theta(S^\beta)$.
\end{remark}

The Weyl transforms of the constructed Moyal multiplier algebras,
or, in other words,  their operator realization in a Hilbert
space, will be considered in a forthcoming paper.

\section*{Appendix}

The following simple lemma is useful in  considering linear
representations of topological groups in  Montel function spaces.

\begin{lemma}\labelL{1}
Let $E$ be a Montel space of functions on a set $X$, and let the
topology of  $E$ is  stronger than that of pointwise convergence.
Let $\mathcal T$ be a linear representation of a locally compact
group  $\mathcal G$ in $E$. If for any function $f\in E$, there
exists a compact neighborhood $\mathcal K$ of unity $e$ in
$\mathcal G$ such that the set $\{\mathcal T_a f\colon
a\in\mathcal K\}$ is bounded in $E$ and if $\lim_{a\to e}
(\mathcal T_a f)(x)= f(x)$ for all $x\in X$, then the
representation $\mathcal T$ is continuous.
\end{lemma}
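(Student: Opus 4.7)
The plan is to combine the Montel property of $E$ with the elementary fact that on a compact Hausdorff space any coarser Hausdorff topology agrees with the ambient one. Because the topology of $E$ is finer than the topology of pointwise convergence on $X$, and the latter is Hausdorff on $E$ (two distinct functions on $X$ differ at some point), this comparison will let me upgrade pointwise convergence of $\mathcal T_a f$ to convergence in $E$. The goal is joint continuity of $\mathcal G\times E\to E\colon (a,f)\mapsto\mathcal T_a f$ at an arbitrary point $(a_0,f_0)$.

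First I would establish orbit continuity at the identity. Fix $f\in E$ and let $\mathcal K$ be the compact neighborhood furnished by the hypothesis; then $B=\{\mathcal T_a f\colon a\in\mathcal K\}$ is bounded, so its closure $\overline B$ is compact in $E$ by the Montel property. On the compact Hausdorff space $\overline B$, the restriction of the pointwise-convergence topology is a coarser Hausdorff topology, and therefore coincides with the $E$-topology there. Since $\mathcal T_a f\to f$ pointwise as $a\to e$ inside $\mathcal K$ and all the relevant elements lie in $\overline B$, the convergence takes place in $E$. Writing $\mathcal T_a=\mathcal T_{a_0}\mathcal T_{a_0^{-1}a}$ and using continuity of the individual operator $\mathcal T_{a_0}$ extends orbit continuity to every $a_0\in\mathcal G$.

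Next I would bootstrap to equicontinuity. For any compact set $\mathcal L\subset\mathcal G$ and any $f\in E$, the set $\{\mathcal T_a f\colon a\in\mathcal L\}$ is the image of $\mathcal L$ under the now-continuous orbit map and is therefore compact, hence bounded, in $E$. Thus the family $\{\mathcal T_a\colon a\in\mathcal L\}$ is pointwise bounded on $E$. Since every Montel space is barrelled, the Banach--Steinhaus theorem (Sec.~III.4.2 in~\cite{Sch}) yields equicontinuity of this family.

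Finally I would deduce joint continuity at $(a_0,f_0)$ from the decomposition
\[
\mathcal T_a f-\mathcal T_{a_0}f_0=\mathcal T_a(f-f_0)+(\mathcal T_a-\mathcal T_{a_0})f_0;
\]
taking $\mathcal L$ to be a compact neighborhood of $a_0$, equicontinuity of $\{\mathcal T_a\colon a\in\mathcal L\}$ controls the first summand uniformly in $a\in\mathcal L$, while orbit continuity controls the second. The main conceptual hurdle is the very first step---converting the hypothesis of pointwise convergence into convergence in $E$---and it is resolved once the compactness-plus-comparison-of-Hausdorff-topologies observation is in place; the subsequent upgrade to equicontinuity and then to joint continuity is routine.
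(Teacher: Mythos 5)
Your proposal is correct and follows essentially the same route as the paper: upgrade pointwise convergence of $\mathcal T_a f$ to convergence in $E$ via the Montel property, obtain equicontinuity from the Banach--Steinhaus theorem, and conclude joint continuity by the standard decomposition. The only (cosmetic) difference is in the first step, where you compare the two Hausdorff topologies on the compact closure of the orbit, while the paper argues that a bounded sequence with a unique cluster point must converge; both devices accomplish the same upgrade.
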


\begin{proof}
If  $\{a_n\}$ is a sequence converging to $e$ in $\mathcal G$,
then $a_n\in \mathcal K$ for sufficiently large  $n$, and the
sequence $\{\mathcal T_{a_n} f\}$ is hence bounded in $E$. By the
definition of a Montel space~\cite{Sch}, such a sequence has at
least one limit point. Since the topology of $E$ is stronger than
the topology of pointwise convergence, it follows from the limit
relation $\lim_{a\to e} (\mathcal T_a f)(x)= f(x)$ that only $f$
can be its limit point, and therefore $\mathcal T_{a_n} f\to f$.
Next, we use the generalized Banach-Steinhaus theorem (
Sec.~III.4.2 in~\cite{Sch}), which is applicable to  Montel
spaces. By that theorem, the pointwise boundedness of the operator
set $\{\mathcal T_a\}_{a\in\mathcal K}$ implies that for any
absolutely convex neighborhood $U$ of zero in $E$, there exists a
neighborhood  $V$ such that $\mathcal T_a(V)\subset U$ for all
$a\in \mathcal K$. Writing $\mathcal T_af -f_0= \mathcal
T_a(f-f_0)+\mathcal T_af_0-f_0$, we see that for the neighborhood
$U$ and  any function $f_0\in E$, there exists a neighborhood
$\mathcal K'(e)$ such that $\mathcal T_af\in f_0+ U$ for all $f\in
f_0+\frac12 V$ and  $a\in \mathcal K'$, i.e. the map $\mathcal
G\times E\to E\colon (a,f)\to\mathcal T_af$ is continuous at the
point $(e,f_0)$. Writing $\mathcal T_af-\mathcal
T_{a_0}f_0=\mathcal T_{a_0}(\mathcal T_{a_0^{-1}a}f-f_0)$, we
conclude that this map is continuous everywhere. The lemma is
proved.
\end{proof}

\begin{acknowledgements}  This paper was supported  by
the  Russian Foundation for Basic Research (Grant No.~09-01-00835)
\end{acknowledgements}

\end{document}